\pgfplotsset{compat=1.10}
\newcommand{\eps}{\ensuremath{\varepsilon}}
\newcommand{\bea}{\begin{eqnarray}}
\newcommand{\eea}{\end{eqnarray}}
\newcommand{\bali}{\begin{align}} 
\newcommand{\eali}{\end{align}}
\newcommand{\bean}{\begin{eqnarray*}} 
\newcommand{\eean}{\end{eqnarray*}}
\newcommand{\beq}{\begin{equation}}
\newcommand{\eeq}{\end{equation}}
\newcommand{\mcZ}{\ensuremath{\mathcal{Z}}}
\newcommand{\mcD}{\ensuremath{\mathcal{D}}}
\newcommand{\mcM}{\ensuremath{\mathcal{M}}}
\newcommand{\mcC}{\ensuremath{\mathcal{C}}}
\newcommand{\mcR}{\ensuremath{\mathcal{R}}}
\newcommand{\mcO}{\ensuremath{\mathcal{O}}}
\newcommand{\mcQ}{\ensuremath{\mathcal{Q}}}
\newcommand{\mcX}{\ensuremath{\mathcal{X}}}
\newcommand{\mcG}{\ensuremath{\mathcal{G}}}
\newcommand{\mcE}{\ensuremath{\mathcal{E}}}
\newcommand{\mcY}{\ensuremath{\mathcal{Y}}}
\newcommand{\RR}{\ensuremath{\mathbb{R}}}
\newcommand{\lra}{\ensuremath{\leftrightarrow}}
\newcommand{\lp}{\ensuremath{\left (}}
\newcommand{\rp}{\ensuremath{\right )}}
\newcommand{\llb}{\ensuremath{\llbracket}} 
\newcommand{\rrb}{\ensuremath{\rrbracket}}
\def\BibTeX{{\rm B\kern-.05em{\sc i\kern-.025em b}\kern-.08em
	T\kern-.1667em\lower.7ex\hbox{E}\kern-.125emX}}
\newtheorem{theorem}{Theorem}
\newtheorem{lemma}{Lemma}
\newtheorem{remark}{Remark}
\newtheorem*{example}{Example}
\newtheorem{definition}{Definition}
\tikzstyle{block} = [draw, fill=white, rectangle, minimum height=4em, minimum width=3em]
\tikzstyle{adder} = [draw, fill=white, circle, path picture={
\tikzstyle{input} = [coordinate]
\tikzstyle{output} = [coordinate]
\tikzstyle{pinstyle} = [pin edge={to-,thin,black}]
\definecolor{TUMpantone540c}{RGB}{000,051,089}
\definecolor{TUMpantone301c}{RGB}{000,082,147}
\definecolor{TUMpantone285c}{RGB}{000,115,207}
\definecolor{TUMpantone542c}{RGB}{100,160,200}
\definecolor{TUMpantone283c}{RGB}{152,198,234}
\definecolor{TUMdgray}{RGB}{088,088,090}
\definecolor{TUMmgray}{RGB}{156,157,159}
\definecolor{TUMmgray}{RGB}{215,217,218}
\definecolor{TUMyellow}{RGB}{255,180,000}
\definecolor{TUMorange}{RGB}{255,128,000}
\definecolor{TUMblue}{RGB}{0,101,189}
\definecolor{TUMpantone301}{RGB}{0,82,147}
\definecolor{TUMgreen}{RGB}{0,124,48}
\definecolor{TUMred}{RGB}{196,7,27}
\definecolor{green4}{rgb}{0,0.57,0}
\definecolor{blue4}{rgb}{0,0,0.57}
\definecolor{lavender2}{rgb}{0.9,0.9,0.98}
\definecolor{lightblue}{rgb}{0.74,0.93,1}
\definecolor{white}{rgb}{1,1,1}
\definecolor{lightyellow}{rgb}{0.93,0.93,0.82}
\definecolor{plum}{rgb}{0.86,0.62,0.86}
\definecolor{palegreen}{rgb}{0.6,1,0.6}
\definecolor{markerred}{rgb}{1,0.85,0.85}
\definecolor{hellgrau}{rgb}{0.95,0.95,0.95}
\tikzstyle{block} = [draw, fill=white, rectangle, minimum height=4em, minimum width=3em]
\tikzstyle{input} = [coordinate]
\tikzstyle{output} = [coordinate]
\tikzstyle{pinstyle} = [pin edge={to-,thin,black}]
\tikzstyle{block} = [draw, fill=white, rectangle, minimum height=4em, minimum width=3em]
\tikzstyle{EllipBlock} = [draw, ellipse, minimum height=4em, minimum width=3em]
\tikzset{
	cloud node/.style={
		cloud, cloud puff arc=140, inner sep=0pt, minimum width=3.5cm,
		minimum height=2.5cm, draw}}
\tikzstyle{input} = [coordinate]
\tikzstyle{output} = [coordinate]
\tikzstyle{pinstyle} = [pin edge={to-,thin,black}]
\tikzstyle{arrow} = [->,>=stealth]
\begin{document}

\title{Uniformly Bounded State Estimation over \\
	Multiple Access Channels \\
%{\footnotesize \textsuperscript{*}Note: Sub-titles are not captured in Xplore and
%should not be used}
\thanks{Corresponding author G.N. Nair. This work was supported by the Australian Research Council under Future Fellowship grant FT140100527. Parts of this work have been presented at ISIT 2020 \cite{zafzoufISIT2020}.}
} %[TBD-TAC Journal-V16.12.2020] \\ 

\author{\IEEEauthorblockN{Ghassen Zafzouf, Girish N.~Nair and Farhad Farokhi}
\IEEEauthorblockA{\textit{Department of Electrical and Electronic Engineering}\\
\textit{University of Melbourne}\\
VIC 3010, Australia\\
gzafzouf@student.unimelb.edu.au, $\lbrace$gnair, ffarokhi$\rbrace$@unimelb.edu.au 
}
%\and
%\IEEEauthorblockN{2\textsuperscript{nd} Given Name Surname}
%\IEEEauthorblockA{\textit{dept. name of organization (of Aff.)} \\
%\textit{name of organization (of Aff.)}\\
%City, Country \\
%email address}
%\and
%\IEEEauthorblockN{3\textsuperscript{rd} Given Name Surname}
%\IEEEauthorblockA{\textit{dept. name of organization (of Aff.)} \\
%\textit{name of organization (of Aff.)}\\
%City, Country \\
%email address}
}

\maketitle

\thispagestyle{plain}
\pagestyle{plain}

\begin{abstract}
%The emerging field of networked control systems integrates ideas and techniques from both control and information theory. 
This paper addresses the problem of distributed state estimation via \textit{multiple access channels} (MACs). We consider a scenario where two encoders are simultaneously communicating their measurements through a noisy channel. Firstly, the zero-error capacity region of the general $M$-input, single-output MAC is characterized using tools from nonstochastic information theory. Next, we show that a tight condition to be able to achieve uniformly bounded state estimation errors can be given in terms of the channel zero-error capacity region. This criterion relates the channel properties to the plant dynamics. These results pave the way towards understanding information flows in networked control systems with multiple transmitters.   
\end{abstract}

\begin{IEEEkeywords}
Distributed state estimation, networked control systems, nonstochastic information, multiple access channels, zero-error capacity region.
\end{IEEEkeywords}
% ---------------------------------------------------------
\section{Introduction}
% ---------------------------------------------------------
Over the last few decades, the important progress in the fields of electronics, nanotechnology and computing has led to the development of smaller and cheaper sensors that can perform real-time computation. The availability of such sensors to a wide range of users has contributed, among other factors, to a tremendous increase in the number of connected devices, i.e., large wireless sensor networks (WSNs) \cite{marano2008, chisci2019, conti2019, zabini2016}, and to the emergence of new technological concepts such as the \textit{Internet of Things} (IoT) and \textit{Machine-to-Machine} (M2M) communication \cite{shirvanimoghaddam2018}. 
%These technologies have a variety of applications including smart grids, intelligent transportation systems and modern industrial environments. In order to sustain the expected ubiquitous connectivity of this number of smart devices, which are sensing, communicating and even controlling on the same networked platform, large-scale systems, where both physical and digital worlds are interacting, must be engineered.      

%Along with these revolutionizing visions, comes a need for substantial technological improvements. 
Traditionally, classical control and estimation theory has assumed that communication between different components of a network occurs over point-to-point links. This standard assumption however cannot hold for recent emerging applications, where numerous subsystems are interacting with each other. In these applications, it may be impractical or costly to set up  and maintain multiple physical point-to-point links between multiple subsystems. One solution is to use a shared medium such as wireless, with frequency and/or time divided up into non-interfering slots, each for the dedicated use by a single transmitter \cite{li2010}. In the case of time-slots, this leads to deterministic round-robin-like protocols for channel access \cite{nesic2007}. However, such methods are generally suboptimal since they do not fully exploit channel resources in order to increase throughput. For real-time applications, the delays caused by a round-robin scheduling protocol may also degrade performance significantly. Event-based time scheduling is an alternative that allows sensors to transmit more quickly when their measurements exceed certain levels or increase rapidly \cite{heemels2010}. This allows communication resources to be used where they are needed most. However, an additional network layer is typically required to make sure network access goes to the device that most needs it.

An alternative is to allow  users to transmit their messages simultaneously over the shared channel, and use coding and modulation to mitigate inter-user interference in addition to noise and other channel effects. Since resources such as specific time slots or frequency bands are not pre-allocated to each user, this method potentially allows faster data flows to be achieved. In real-time applications with a large number of transmitters, it could also improve latency compared to a time-division approach. A simple model of simultaneous communication is the {\em multiple access channel (MAC)}, consisting of different users who aim to each send an independent message reliably to a common receiver. This was initially introduced by Shannon in his seminal work \cite{b3}. In Part \ref{subsec:MAC} of this introduction, we present an example of a noisy MAC to familiarize the reader with this model.

In contrast to communication systems, where performance is studied in an average sense, many control and estimation systems are deployed for mission-critical purposes, with any error leading to potentially devastating consequences. 
%Therefore, in order to guarantee the system performance in every operation, different components of the decentralized system should be designed 
This motivates system designs based on the worst-case scenario \cite{nair2013, nair2015, massimo2019, gargrani2020}. 
With this in mind, the aim of this work is to understand the problem of remote estimation of dynamical systems over MACs with bounded noise rather than assuming a statistical distribution. In this context, we try to answer the following fundamental questions: is it possible to provide a reliable estimation of the states of distinct plants observed by different sensors whose measurements are sent simultaneously over a wireless channel? If so, what is the connection between the intrinsic properties of the dynamical systems and the transmission data rates?     

% ---------------------------------------------------------
\subsection{Multiple Access Channels (MACs) \& Zero-Error Capacity}
\label{subsec:MAC}
% ---------------------------------------------------------
%In many modern practical applications, information is sent simultaneously by two or more users over the same communication medium to a common receiver. Eventually,  all messages transmitted by any sender should be reliably recovered by the receiver. Thus, in addition to channel noise, the decoder must also take into account the interference between the different users. Some real-world examples of this communication scenario include the uplink of a cellular system and wireless industrial networks. In fact, the adoption of the IoT in industrial automation applications resulted in the emergence of the notion of \textit{Industry 4.0}. This technological transition requires a new set of network specifications including simultaneous communication, low latency, high device availability and minimum maintenance and operation costs \cite{wollschlaeger2017}.
As discussed above, allowing senders to transmit simultaneously can be advantageous in terms of improving throughput and reducing latency. A channel model capturing the essence of this problem is the aforementioned MAC. A natural example is the Gaussian two-user MAC
\begin{align}
	Y = X^1 + X^2 + Z, 
\end{align} 
where $X^1$ and $X^2$ are the inputs to the channel, $Z$ denotes zero-mean Gaussian noise independent of $X^1$ and $X^2$, and $Y$ is the channel output. In \cite{avestimehr2011}, it was shown that a reasonably good approximation of the Gaussian MAC can be obtained by approximating it as a deterministic two-user XOR MAC model carrying summation in $\mathbb{Z}_2$. A further example of MACs is the {\em binary adder channel}, which unlike the XOR model, performs the addition over $\mathbb{Z}$ and will be discussed in greater detail in Section \ref{sec:BAC}.  

In communications, the (ordinary) \textit{channel capacity} $C$ is the maximum rate at which information can be transferred with vanishingly small error probability across the channel \cite{coverThomas91}. Unlike point-to-point channels, whose channel capacity $C$ is a non-negative scalar, in the context of multi-user channels, including the MAC in particular, we talk about a \textit{channel capacity region} $\mcC$ that lies in an $M$-dimensional space, where $M$ is the number of users in the system.  
For more details regarding this topic we refer the reader to \cite{gamal12}.

In the context of worst-case state estimation, it turns out that the notion of \textit{zero-error capacity} $C_0$ is a more insightful figure of merit than classical channel capacity $C$ \cite{nair2013, nair2015}. The zero-error capacity $C_0$ of a point-to-point channel is defined as the highest block-coding rate which yields exactly zero decoding errors at the receiver \cite{ZEshannon1956}. Intuitively, the block-codes with zero-error property are those leading to an output set that consists of distinct elements, i.e., no codewords result in the same output. 

In a similar manner to $\mcC$, in a multi-user communication setup, the \textit{zero-error capacity region} $\mcC_0$ with $M$ senders is contained in an $M$-dimensional space, rather than a single axis like $C_0$. A formal definition of $\mcC_0$ is introduced in Section \ref{sec:ZECapacity}.      

% ---------------------------------------------------------
\subsection{Literature Overview}
%---------------------------------------------------------
The decisive role of information theory in digital communications arises largely from the fundamental coding bounds it provides. To approach these bounds, coding schemes with arbitrarily long block lengths are used, which result in vanishingly small probability of decoding errors. However, long block-lengths are ill-suited for real-time control systems, since they lead to long delays that degrade closed-loop performance significantly. Furthermore, in real-time control and estimation applications where there are safety requirements or mission-critical objectives, closed-loop performance is often quantified in a worst-case sense, rather than probabilistically.  

An important step towards understanding communication requirements for worst-case state estimation was taken in \cite{matveevSavkin2007}, where it was shown that to achieve almost surely (a.s.) uniform state estimation of a linear, disturbed dynamical system over a stochastic discrete memoryless point-to-point channel, it is necessary and the open-loop \emph{topological entropy}\footnote{The notion of topological entropy of a system is defined as the sum-log of its unstable pole magnitudes.} $h$ does not exceed the channel zero-error capacity $C_0$. If $C_0>h$ strictly, then there exists a coding and estimation scheme that achieves a.s. uniform state estimation. By noting that $C_0$ depends on the combinatorial structure  of the channel rather than its probabilistic nature, this result was rederived in \cite{nair2013} for surely bounded estimation, by introducing the framework of \emph{uncertain variables} (uv's) and \emph{nonstochastic information} $I_*$. In a recent article \cite{wiese2019}, the authors introduced the notion of uncertain wiretap channel and studied the problem of secure state estimation of a noisy unstable dynamical system in a nonstochastic setup. Tools from this non-probabilistic framework were also used to study the problem of bounded state estimation over point-to-point channels with finite memory \cite{saberi2020, saberi2021}.

The main focus in previous works was on studying either state estimation or control problem in the context of point-to-point channels \cite{sahai2006, nair2007, borkar2001,nigel2010}. 
Stochastic stabilization of networked control systems over multiple wireless channels has been extensively studied in, e.g., \cite{rich2016, garone2012}.
Furthermore, the problem of filtering for stochastic systems consisting of distributed sensors exchanging information over point-to-point links was investigated for both discrete, e.g., in \cite{saber2007, saber2009}, and continuous settings, e.g., in \cite{tanwani2021}.
To the best of our knowledge, the problem of state estimation over a MAC with bounded noise has not been considered yet. An exception is the work of Zaidi \textit{et al.} \cite{zaidi2010}, where classical stochastic tools are used to present sufficient conditions ensuring the mean-square stabilization of two scalar linear time-invariant (LTI) systems over a noisy two-input, single output MAC. The paper \cite{gupta2019} also obtains necessary and sufficient conditions for stabilizing two scalar plants across a shared Gaussian MAC. In that paper, the authors distinguish between the case where encoders are entirely independent from each other, and where information sharing among them is allowed. 

In contrast, we address the problem of worst-case state estimation over MAC with bounded noise without assuming a specific statistical distribution. Hence, with the aim of understanding distributed systems with multiple sensors sharing the same channel, the characterization of the zero-error capacity region $\mcC_0$ of an $M$-input, single-output MAC with bounded noise becomes necessary. To date, no formula for such a region has been obtained and $\mcC_0$ of MACs in general remains an open problem, in contrast to its ordinary capacity region studied in, e.g., \cite{ahlswede71,liao72}. The work in \cite{nairITW2019} characterized the zero-error capacity region for the special case of a MAC with two correlated transmitters. Another class of MACs has been studied in \cite{zafzoufITW2020}, namely MAC with pairwise shared messages as well as a common message among all users. To derive an expression of the zero-error capacity region, we use \textit{nonstochastic information theory} \cite{nair2013, nair2015, massimo2019}. However, unlike the point-to point zero-error capacity in \cite{nair2015}, our definition requires the concept of \textit{conditional nonstochastic information}.        
% -------------------------------------------
\subsection{Overview of Main Result}
%--------------------------------------------
Our main contributions are hence the following: \textbf{(1)} We firstly generalize the characterization of the zero-error capacity region obtained in \cite{nairITW2019} for a two-user MAC to the $M$-user case with one common message using the framework of non-stochastic information (Theorem \ref{conjecture:DMAC}). To this end, we present a converse proof and establish the achievability argument by constructing a suitable coding scheme. \textbf{(2)} Next, the problem of state estimation over MACs with bounded noise is studied and a theorem (Theorem \ref{th:StateEstimation}) linking between the intrinsic properties of the dynamical systems and the zero-error capacity of the communication channel is shown. A numerical example involving the binary adder channel is then used to illustrate this theoretical result.   

The first step towards understanding information flow in large networked systems with multiple sensors and estimators is to start with analyzing a simple network topology. Consider the setup depicted in Fig.~\ref{fig:setup}. It consists of three discrete LTI dynamical systems characterized by the following system equations for $i \in \lbrace 0,1,2 \rbrace $:
\begin{subequations} 
	\begin{align}
		X^i(k+1) &= A^i X^i(k) + V^i(k) \in \mathbb{R}^{d_i}, \label{originalState1} \\
		Y^i(k)   &= C^i X^i(k) + W^i(k) \in \mathbb{R}^{b_i}, \label{originalOutput1}  
	\end{align}
\end{subequations}
where $V^i(k)$ and $W^i(k)$ denote process and measurement noise signals at time instant $k \in \mathbb{Z}_{\geq 0}$.
% -------------------------------------------
%       Block Diagram                       %
% -------------------------------------------
\begin{figure}
\centering
\scalebox{.75}{
	\begin{tikzpicture}[auto, node distance=2cm,>=latex'] [scale=1]
	% -------------------------------------------
	\node[block, name=P1lant] [minimum width=13mm, minimum height=10mm, pin={[pinstyle]above:{\small $w^{1}(k), v^{1}(k)$}}] (P1lant) {Plant 1};
	\node[inner sep=0,minimum size=0, right = 0cm of P1lant] (k12) {};
	% ------------------------------------------- 
	\node[block, below =1cm of P1lant] [minimum width=13mm, minimum height=10mm, pin={[pinstyle]above:{\small $w^{0}(k), v^{0}(k)$}}] (plant0) {Plant 0}; 
	\node[circle,fill,inner sep=.7pt, right =  1.37cm of plant0] (k0) {};    
	% ---------------------------------------------
	\node[block, below =1cm of plant0] [minimum width=13mm, minimum height=10mm, pin={[pinstyle]above:{\small $w^{2}(k), v^{2}(k)$}}] (P2lant) {Plant 2}; 
	\node[inner sep=0,minimum size=0, right = 0cm of P2lant] (k21) {};
	% --------------------------------------------
	\node[block, right =1cm of k12] [minimum width=8mm, minimum height=10mm] (E1ncoder) {$\gamma^1$}; 
	
	% -----------------------------------------
	\node[block, right =1cm of k21] [minimum width=8mm, minimum height=10mm] (E2ncoder) {$\gamma^2$}; 
	% --------------------------------------------
	\node[block, right =1.18cm of k0] [minimum width=2mm, minimum height=25mm, pin={[pinstyle]left:{\small $z(k)$}}] (sum) {MAC}; 
	\node[inner sep=0,minimum size=0, right = 1.3cm of E1ncoder] (e1) {};
	\node[inner sep=0,minimum size=0, right = 1.3cm of E2ncoder] (e2) {};
	% ----------------------------------------
	\node[block, right =.75cm of sum] [minimum width=6mm, minimum height=25mm] (decoder) {$\delta$}; 
	% -----------------------------------------
	\node[inner sep=0,minimum size=0, below =  0cm of E1ncoder] (l1) {};  
	\node[inner sep=0,minimum size=0, above =  0cm of E2ncoder] (l2) {};  
	% -------------------------------------------
	\node[inner sep=0,minimum size=0, right = .9cm of decoder.70 ] (decod0) {}; 
	\node[inner sep=0,minimum size=0, right = .9cm of decoder] (decod1) {};  
	\node[inner sep=0,minimum size=0, right = .9cm of decoder.-70] (decod2) {}; 
	% ---------------------------------------
	\draw [draw,->] (k12) -- node[below]      {{\small $y^{1}(k)$}} (E1ncoder);
	\draw [draw,->] (k21) -- node             {{\small $y^{2}(k)$}} (E2ncoder);
	\draw [draw,-]  (E1ncoder) -- node[below] {{\small $s^{1}(k)$}} (e1);
	\draw [draw,-]  (E2ncoder) -- node        {{\small $s^{2}(k)$}} (e2);
	\draw [draw,-]  (plant0)   -- node        {{\small $y^{0}(k)$}} (k0);
	\draw [draw,->] (k0)       -- (l1);
	\draw [draw,->] (k0)       -- (l2);
	\draw [draw,->] (e1)       -- (sum);
	\draw [draw,->] (e2)       -- (sum);
	\draw [draw,->] (sum)      -- node {{\small $q(k)$}} (decoder);
	\draw [draw,->] (decoder.70)  -- node[above] {{\small $\hat{x}^{0}(k)$}} (decod0);
	\draw [draw,->] (decoder)  -- node[above] {{\small $\hat{x}^{1}(k)$}} (decod1);
	\draw [draw,->] (decoder.-70) -- node[above] {{\small $\hat{x}^{2}(k)$}} (decod2);
	\end{tikzpicture}}
\caption{State estimation of LTI systems with disturbances over two-input MAC channels.}
\label{fig:setup}
\end{figure}
% ------------------------------------------------
Two sensors $\gamma^1$ and $\gamma^2$ are deployed to generate the sequences $s^1(k)$ and $s^2(k)$ using the plants' outputs. Although measurement signals $y^1(k)$ and $y^2(k)$ are only seen by their respective encoder, the output of Plant 0 is transmitted to both devices. The aim of such formulation is to capture some of the essential elements of distributed state estimation problem, e.g., each sensor observes a different subset of the overall system's dynamical modes and any possible overlap between these subsets is modelled by the common Plant 0. 

We define the \emph{topological entropy} $h_i$ as 
%the rate at which the dynamical systems generates uncertainty considered \emph{information}. This is computed as 
\begin{align}
h_i := \sum_{ \ell : \vert \lambda^i_{\ell} \vert \geq 1} \log \vert \lambda^i_{\ell} \vert, \ \ \text{for } i = 0,1,2.
 \label{eq:topEntropies}
\end{align}
where $\lambda^i_{\ell}$ are the unstable poles of the $i$-th system. Under suitable and natural assumptions, we show that if there exists a coder-estimator tuple $(\gamma^1 , \gamma^2 , \delta )$ yielding uniformly bounded estimation errors, i.e., between the estimation $\hat{X}(k)$ and the real state $X(k)$, then
\begin{align}
	\boldsymbol{h} \in  \mathcal{C}_{0},
	\label{conj1}
\end{align} 
where $h := \left( h_0 , h_1, h_2 \right)^T $ is the vector of topological entropies (\ref{eq:topEntropies}) of the corresponding systems and $\mathcal{C}_{0}$ denotes the zero-error capacity region of the MAC. On the other hand, if\footnote{The operator $\mathrm{int} \left(\cdot\right)$ denotes the interior of a given region.} $h\in \mathrm{int} \left(\mathcal{C}_0 \right)$, then a coder-estimator tuple that achieves uniformly bounded state estimation errors can be constructed. 
%In other words, condition (\ref{conj1}) is both necessary and sufficient for every point inside the interior region of $\mcC_0$ and it is shown to be necessary for all points on the boundary. 
This condition is considered tight as sufficiency and necessity differ only on the boundary, which is a set of Lebesgue measure zero. 

Unlike our previous paper \cite{nairITW2019}, where the two-user MAC model was studied, in this work we derive a channel coding theorem characterizing the zero-error capacity region of an $M$-user MAC with one common message and $M\geq 2$. Additionally, we present here the detailed necessity proof of Theorem \ref{th:StateEstimation}, which appeared as a brief communication in \cite{zafzoufISIT2020}, using the notions of nonstochastic information as well as nonstochastic conditional information, and furthermore prove the sufficient condition, by constructing a suitable scheme. Finally, the setup (\ref{originalState1})-(\ref{originalOutput1}) illustrated in Fig.~\ref{fig:setup} is shown here to be general for noiseless linear systems. 
%In the subsequent sections of this paper, we explore this result and establish the proofs in a detailed manner. 

% ---------------------------------------------------------
\subsection{Structure of the Paper}
%---------------------------------------------------------

The rest of the paper is organized as follows. In Section \ref{sec:NonstochasticInformation}, some basic definitions related to the nonprobabilistic framework are introduced and the  MAC model along with the zero-error coding scheme are presented.  
Next, we extend an earlier result on the zero-error capacity region of a two-input, single-output MAC \cite{nairITW2019} to the more general case of $M$ inputs for any given block-length $n$ in Section \ref{sec:ZECapacity}. Along with the results, both converse and achievability proofs are provided in this section. 
Then, the problem of uniform state estimation over two-user MAC is thoroughly investigated in Section \ref{sec:distributedStateEstimation}. Furthermore, an example of state estimation over the \textit{binary adder channel} (BAC) is studied in Section \ref{sec:numExample}.
Finally, Section \ref{sec:conclusion} concludes the article by summarizing the main contributions and discussing possible future directions.

% ---------------------------------------------------------
\subsection{Notation}
%---------------------------------------------------------

In this paper, we use upper case letters, e.g., $X$, to denote uncertain variables (uv's), and lower case letters, e.g., $x$, for their realizations. Calligraphic letters, such as $\mathcal{X}$, stand for sets, and the cardinality of set $\mathcal{X}$ is represented by $\vert \mathcal{X} \vert$. Note that a pair of vertical bars, i.e., $\vert a \vert$, is also used to denote the absolute value of a scalar $a$. We represent the sequence $\lbrace x_i \rbrace_{i =m}^n $ by $x_{m:n}$. Furthermore, the max-norm of a vector $v$ is denoted by $\Vert v \Vert$. The expression $[m:n]$ represents the sequence of integers $\lbrace m, m+1, \ldots , n-1, n \rbrace$. Finally, the symbol $\lceil \cdot \rceil $ stands for the ceiling function, and the logarithms throughout this paper are in base 2. 
%---------------------------------------------------------
\section{Preliminaries of Nonstochastic Information}
\label{sec:NonstochasticInformation}
%---------------------------------------------------------

In this section, we briefly review the \emph{uncertain variable} (uv) framework introduced in \cite{nair2013, nair2015}. Using this framework, the problem of zero-error communication over an $M$-user MAC is subsequently analyzed.  
%---------------------------------------------------------
\subsection{Uncertain Variables, Unrelatedness and Markovianity}
%---------------------------------------------------------

Consider the sample space $\Omega$ as shown in Fig.~\ref{fig:notionUV}. A uv $X$ consists of a mapping from $\Omega$ to a set $\mathcal{X}$ \cite{nair2013}. Hence, each sample $\omega\in\Omega$ induces a particular realization $X(\omega)\in \mathcal{X}$. Given a pair of uv's $X$ and $Y$,  the \emph{marginal}, \emph{joint} and \emph{conditional ranges} are denoted as 
\begin{align}
\llb X \rrb & := \lbrace X(\omega): \omega \in \Omega \rbrace \subseteq \mathcal{X}, \\
\llb X,Y \rrb & := \lbrace \left( X(\omega), Y(\omega) \right) : \omega \in \Omega \rbrace \subseteq \mathcal{X} \times \mathcal{Y}, \\
\llb X \vert y \rrb & := \lbrace X(\omega): Y(\omega) = y, \omega \in \Omega \rbrace \subseteq \mathcal{X}.  
\end{align} 

% -------------------------------------------------------------
\begin{figure}
\centering
\scalebox{.8}{
	\begin{tikzpicture}
	\draw [fill=gray!20, rotate=0] (-3,0) ellipse (1.3 and .5);
	\draw [thick,-latex] (-1,0) -- (2,0) node [right] {$\mathcal{X}$};
	\draw [-] (0,0.1) -- (0,-0.1);
	\draw [-] (1,0.1) -- (1,-0.1);
	\node at (0.5,-.5) {$\llb X \rrb$};
	\node at (1.3,.3) {$x = X(\omega )$};
	\node at (-2.8,-.8) {\textbf{No measure on $\Omega$ !}};
	\node at (-0.8,.75) {$X$};
	\coordinate (X) at (-2.5,.1);
	\fill (.7,0) circle (2pt);
	\fill (X) circle (2pt) node [below] {$\omega$};
	\node at (-4.7,0) {$\Omega$};
	\path[->] (-2.48,.1) edge [bend left] (.7,0.1);
	\end{tikzpicture}}
\caption{Notion of uncertain variable (uv).}
\label{fig:notionUV}
\end{figure} 
% -------------------------------------------------------------- 
The dependence on $\Omega$ will normally be hidden, with most properties of interest expressed in terms of operations on these ranges. As a convention, uv's are denoted by upper-case letters, while their realizations are indicated in lower-case. The family $\left\{\llb X|y\rrb: y\in\llb Y\rrb\right\}$ of conditional ranges is denoted $\llb X|Y\rrb$.

\begin{definition}[Unrelatedness \cite{nair2013}]
\label{unrelatedDefn}
The uv's $X_1, X_2, \cdots X_n $ are said to be {\em (mutually) unrelated} if
\beq
\llb X_{1}, X_{2}, \cdots, X_n \rrb = \llb X_{1}\rrb \times \llb X_{2}\rrb \times \cdots \times \llb X_{n}\rrb.
\eeq
We denote two unrelated uv's $X_1$ and $X_2$ by $X_1 \perp X_2$.
\end{definition}

\begin{remark}
Unrelatedness is closely related to the notion of qualitative independence \cite{renyi70} between discrete sets. It can be proven that unrelatedness is equivalent to the following conditional range property
\beq
\llb X_{k} | x_{1:k-1} \!] = \llb X_{k}\rrb,  \ \forall x_{1:k-1}\in   \llb X_{1:k-1}\rrb, \  k\in [2:n].
\label{unrelatedconditionalform}
\eeq
\end{remark}

\begin{definition}[Conditional Unrelatedness \cite{nair2013}]
The uv's $X_1,...,X_n$ are said to be {\em conditionally unrelated given} $Y$  if 
\begin{align}
\llb X_1, \ldots , X_n | y\rrb 
= \llb X_1|y\rrb \times\cdots \times \llb X_n| y \rrb , \ 
\forall y\in \llb Y \rrb.
\end{align}
\end{definition}

\begin{definition}[Markovianity \cite{nair2013}]
The uv's $X_1,Y$ and $X_2$ form a {\em Markov uncertainty chain} denoted as $X_1\lra Y \lra X_2$, if 
\beq
\llb X_1|y,x_2\rrb = \llb X_1|y\rrb, \ \ \forall (y,x_2)\in \llb Y,X_2\rrb.
\eeq
The Markov chain $X_1 \leftrightarrow Y \leftrightarrow X_2$ is equivalently denoted by $X_1 \perp X_2 | Y$, i.e., $X_1$ and $X_2$ are unrelated given $Y$.
\label{def:markovianity}
\end{definition}

\begin{remark}
It can be shown that Def.~\ref{def:markovianity} is equivalent to $X_1$ and $X_2$ being conditionally unrelated given $Y$, i.e.,
\beq
\llb X_{1}, X_{2}| y \rrb = 
\llb X_{1} | y \rrb \times \llb X_{2} | y \rrb , \ \forall y\in \llb Y\rrb .
\label{condunrelated}
\eeq
By the symmetry of \eqref{condunrelated}, we conclude that $X_1 \lra Y\lra X_2$  iff $X_2 \lra Y \lra X_1$.
\end{remark}
%%%%%%%%%%%%%%%%%%%%%%%%%%%%%%%%%%%%%%%%%%%%%%%%%%%%%%%%%%
\subsection{Preliminaries on Nonstochastic Information}

Before presenting the notion of nonstochastic information, we firstly discuss some important background concepts. Throughout this subsection $X$, $Y$, $Z$, $Z'$ and $W$ denote uv's.

\begin{definition}[Overlap Connectedness \cite{nair2013}]
Two points $x \in \llb X\rrb$ and $x' \in \llb X\rrb $ are said to be $\llb X \vert Y \rrb$-overlap connected, denoted $x \leftrightsquigarrow  x'$, if there exists a finite sequence $\lbrace X \vert y_i \rbrace_{i=1}^{m} $ of conditional ranges such that $ x \in \llb X \vert y_1 \rrb,\ x' \in \llb X \vert y_m \rrb$ and $\llb X \vert y_i \rrb \cap \llb X \vert y_{i-1}\rrb \neq \emptyset$, for each $i \in \left[ 2, \cdots, m\right] $. 
\end{definition} 

Obviously, the overlap connectedness is both transitive and symmetric, i.e., it is  an equivalence relation. Thus, it results in  equivalence classes that cover $\llb X\rrb$ and form a unique partition. We call this family of sets the $\llb X |Y \rrb${\em -overlap partition}, denoted by $\llb X \vert Y \rrb_*$.
\begin{figure}
\begin{minipage}{.24\textwidth}
	\scalebox{.8}{
	\begin{tikzpicture}
		\draw [thick,-latex]  (0,0) -- (3.5,0) node [right] {$X$};
		\draw [thick,-latex]  (0,0) -- (0,3)   node [above] {$Y$};  
		
		\fill (1,1) circle (3pt);
		\fill (3,2) circle (3pt);

		\node at (-.5,1) {$y_1$};
		
		\node at (-.5,2) {$y_2$};
		
		\draw [dashed] (1,1) -- (0,1);
		\draw [dashed] (3,2) -- (0,2);
		
		\draw [dashed] (1,1) -- (1,0);
		\node at (1,-.6) {$\llb X \vert y_1 \rrb$};
		
		\draw [dashed] (3,2) -- (3,0);
		\node at (3,-.6) {$\llb X \vert y_2 \rrb$};
		
		\draw [-] (3,-.1) -- (3,0.1);
		\draw [-] (1,-.1) -- (1,0.1);
		\node at (1.8,-1.2) {\textbf{(a)}};
	\end{tikzpicture}}
\end{minipage}
\hfill
\begin{minipage}{.24\textwidth}
	\scalebox{.8}{
	\begin{tikzpicture}
		\draw [thick,-latex]  (0,0) -- (3.5,0) node [right] {$X$};
		\draw [thick,-latex]  (0,0) -- (0,3)   node [above] {$Y$};  
		
		\fill (1,1) circle (3pt);
		\fill (3,2) circle (3pt);
		
		\fill (1,2) circle (3pt);
		
		\node at (-.5,1) {$y_1$};
		
		\node at (-.5,2) {$y_2$};
		
		\draw [dashed] (1,1) -- (0,1);
		\draw [dashed] (3,2) -- (0,2);
		
		\draw [dashed] (1,1) -- (1,0);
		\node at (1,-.6) {$\llb X \vert y_1 \rrb$};
		
		\draw [dashed] (3,2) -- (3,0);
		
		\draw [-] (3,-.1) -- (3,0.1);
		\draw [-] (1,-.1) -- (1,0.1);
		
		\node at (2.5,-.6) {$\llb X \vert y_2 \rrb$};
		\draw [->] (2.5,-.3) -- (3,-.15);
		\draw [->] (2.5,-.3) -- (1,-.15);
		
 		\node at (1.8,-1.2) {\textbf{(b)}};
	\end{tikzpicture}
}
\end{minipage}
\caption{Illustrative example of \textbf{(a)} Overlap disconnected points; \textbf{(b)} Overlap connected points.}
\label{fig:exOC}
\end{figure}
% -------------------------------------------------------------
\begin{definition}[Nonstochastic Information \cite{nair2013}]
The {\em nonstochastic information} between $X$ and $Y$ is given by
\beq
I_*[X;Y] = \log_2 \left | \llb X|Y\rrb_*\right |.
\label{defIstar}
\eeq
\end{definition} 

\begin{remark}
Note that the nonstochastic information is symmetric, i.e., $I_*[X;Y] =  I_*[Y;X] $
\end{remark}

\begin{remark}
	Similar to classical information theory, there is an analogous Data Processing Inequality in nonstochastic infromation theory. For any Markov uncertainty chain $W \leftrightarrow X \leftrightarrow Y $, it holds that $I_*[W;Y] \leq   I_*[X;Y]$ \cite{nair2013}. In other words, inner uv pairs in the chain share more information than outer ones.
\end{remark}

\begin{example}
	\emph{Consider uv's $X$ and $Y$ with conditional range family $ \llb X \vert Y \rrb = \lbrace \llb X \vert y_1 \rrb, \llb X \vert y_2 \rrb \rbrace$. Fig.~\ref{fig:exOC}(a) illustrates an example of overlap disconnected points. Observe that $\llb X \vert y_1 \rrb \cap  \llb X \vert y_2 \rrb = \lbrace \emptyset \rbrace $, and hence, the unique overlap partition $\llb X \vert Y \rrb_*$ consists of two singleton sets. Thus, the nonstochastic information in this case is $I_*[X;Y] = \log_2 2 = 1 $ bit.
	On the other hand, Fig.~\ref{fig:exOC}(b) shows the case where the points are connected in overlap sense. In this case, it is easy to see that $\llb X \vert y_1 \rrb \cap  \llb X \vert y_2 \rrb \neq \lbrace \emptyset \rbrace$ and $\llb X \vert Y \rrb_*$ does no longer consist of two singleton sets. Hence, $I_*[X;Y] = \log_2 1 = 0 $ bit.     	    
	}
\end{example} 

\begin{definition}[Common Variables \cite{shannonLatticeTIT1953,wolfITW2004}] \label{defCommonVariable}

A uv $Z$ is said to be a {\em common variable (cv)}  for $X$ and $Y$ if there exist  functions $f$ and $g$ such that
$Z = f(X) = g(Y)$. 

Furthermore, a cv is called {\em maximal} if any other cv $Z'$ admits a function $h$ such that $Z' = h(Z)$.
\end{definition}

\begin{remark}
Note that no cv can take more distinct values than the maximal one.
The concept of a maximal common variable was first presented by Shannon in the framework of random variables \cite{shannonLatticeTIT1953}, to which  he referred by the term ``common information element" for a maximal cv.
\end{remark}

The nonstochastic information $I_*[X;Y]$ is precisely the log-cardinality of the range of a maximal cv between $X$ and $Y$. This is because it can be shown that $\forall (x,y) \in \llb X,Y\rrb$,  the partition set in $\llb X|Y\rrb_*$ that contains $x$ also uniquely specifies the set in $\llb Y|X\rrb_*$ that contains $y$. Thus these overlap partitions define a cv for $X$ and $Y$, with corresponding functions $f$ and $g$ given by the labelling. Furthermore, this cv can be proved to be maximal. See \cite{nair2015} for further details.

\begin{definition}[Conditional $I_*$ \cite{nair2015}] The {\em conditional nonstochastic information} between  $X$ and $Y$ given  $W$  is 
\beq
I_*[X;Y|W]:= \min_{w\in\llb W\rrb } \log_2 \left | \llb X |Y,w \rrb_* \right |,
\label{defcondinfo}
\eeq
where for a given $w\in\llb W\rrb$, $\llb X |Y,w \rrb_*$ is the overlap partition of $\llb X|w\rrb$ induced by the family $\llb X |Y,w \rrb$  of conditional ranges $\llb X |y,w \rrb$, $y\in\llb Y|w\rrb$. 
\end{definition}

\begin{remark}
It can be shown that $I_*[X;Y|W]$ also has an important interpretation in terms of cv's: it is the maximum log-cardinality of the ranges of all cv's $Z = f(X,W) = g(Y,W)$ that are unrelated with $W$.
For more details see \cite{nair2015}.
\end{remark}

\begin{example}
 \emph{To give an intuition to the reader regarding the computation of the conditional nonstochastic information, we show in Fig.~\ref{fig:conditionalRanges} an example of a family $\llb X,Y|W\rrb$ with $W = \{ w_1, w_2\} $. The nonstochastic conditional information in this case is $I_*[X;Y \vert W] := \log_2 (2) = 1 \ \text{bit}.$}
\end{example}

\begin{figure}
	\begin{minipage}{.1\textwidth}
		\begin{flushleft}
			\scalebox{.8}{
			\begin{tikzpicture}
			\draw [thick,-latex]  	(0,0) -- (4.5,0) node [above] {$X$};
			\draw [thick,-latex]  	(0,0) -- (0,3)   node [above] {$Y$};  
			\draw [fill=red!20]   	(.7,2.1)   ellipse (.5 and .4);
			\draw [fill=yellow!20]  (2.3,1.4)      ellipse (.5 and .15);
			\draw [fill=blue!20]  	(3.5,.6)  ellipse (.2 and .5);
			%-----------------------
			\node at (2.25,3) {\footnotesize $W = w_1$};
			\node at (2.25,-.50) {\footnotesize $\vert \llb X \vert Y,w_1 \rrb_* \vert = 3$};
			\end{tikzpicture}
			}
		\end{flushleft}
	\end{minipage}
	\hspace{30mm}
	\begin{minipage}{.1\textwidth}
		\centering
		\scalebox{.8}{
		\begin{tikzpicture}
		\draw [thick,-latex]  	(0,0) -- (4.5,0) node [above] {$X$};
		\draw [thick,-latex]  	(0,0) -- (0,3)   node [above] {$Y$};  
		\draw [fill=red!20]   	(1.2,.7)   ellipse (.7 and .2);
		\draw [fill=yellow!20]  (2,1.23)      ellipse (.5 and .15);
		\draw [fill=blue!20]  	(3.7,1.8)  ellipse (.5 and .3);
		%-------------------------------------------------------------------
		\node at (2.25,3) {\footnotesize $W = w_2$};
		\node at (2.25,-.50) {\footnotesize $\vert \llb X \vert Y,w_2 \rrb_* \vert = 2$};
		\end{tikzpicture}
		}
	\end{minipage}
	\caption{Illustrative example of a family of conditional ranges for two realizations $w_1$ and $w_2$.}
	\label{fig:conditionalRanges}
\end{figure} 
%%%%%%%%%%%%%%%%%%%%%%%%%%%%%%%%%%%%%%%%%%%%%%%%%%%%%%%%%%
\section{Error-Free Communication over $M$-User MAC}
\label{sec:ZECapacity}
%%%%%%%%%%%%%%%%%%%%%%%%%%%%%%%%%%%%%%%%%%%%%%%%%%%%%%%%%%
This section introduces the $M$-user MAC communication system in the nonstochastic framework. Next, the previously discussed concepts are used to obtain an exact characterization of the zero-error capacity region (Theorem \ref{conjecture:DMAC}). This characterization is then used in Section \ref{sec:distributedStateEstimation} to find tight conditions for achieving bounded state estimation errors over a MAC. 
%%%%%%%%%%%%%%%%%%%%%%%%%%%%%%%%%%%%%%%%%%%%%%%%%%%%%%%%%%
\subsection{System Model}
%%%%%%%%%%%%%%%%%%%%%%%%%%%%%%%%%%%%%%%%%%%%%%%%%%%%%%%%%%
Consider the communication setup depicted in Fig. \ref{fig:M_User_MAC}. The system consists of $M$ transmitters, each wishing to convey a distinct \emph{private message} $W^j$, where $j \in [1:M]$, and a \emph{common message} $W^0$ to a unique receiver over an $M$-user MAC. Suppose that the messages $W^0, W^1, W^2 \cdots , W^M$ are mutually unrelated and finite-valued. We assume without loss of generality that for $i \in [0:M]$, the messages $W^i$ take the integer values $[1:w_{\max}^i]$ for some integer $w_{\max}^i\geq 1$. For a given block-length $n \geq 1$, the messages are encoded into channel input sequences $X^1_{1:n}, X^2_{1:n}, \cdots , X^M_{1:n}$ as
\beq
X^j_{1:n}= \mathcal{E}^j (W^0,W^j), \ \  j \in [1:M],
\label{code}
\eeq
where $\left\lbrace \mathcal{E}^j \right\rbrace_{j=1}^{M}$ are the coding laws at each transmitter. Note that the \emph{common message} $W^0$ is seen by all encoders while  the {\em private messages} $W^j$ are only available to their respective transmitters.
The code rate for each message is defined as 
\beq
R^i := ( \log_2 w_{\max}^i )/n, \  \ i \in [0:M].
\label{ratedef}
\eeq
%Furthermore, we define a nonstochastic version of the conventional notion of \textit{stationary memoryless MAC} as follows:
%\begin{definition}
%For all $i \in [1:M]$, consider the input spaces $\mcX^i$ and a set-valued transition function $\mathbf{T}: \mcX^1 \times \cdots \times \mcX^M \mapsto 2^{\mcY}$, a stationary memoryless uncertain MAC maps any $M$ uncertain input signals $\lbrace X^i \rbrace_{i \in [1:M]}$ with respective ranges $ \llb X^i \rrb \subseteq \mcX^i$ to an uncertain output signal $Y$ such that 
%\begin{align}
%	\llb Y_{1:k} \vert x^1_{1:k}, \cdots, x^M_{1:k} \rrb &= \mathbf{T}\left(x^1_{1}, \cdots , x^M_{1}\right) \times \cdots \times  \mathbf{T}\left(x^1_{k}, \cdots , x^M_{k}\right),
%	\label{eq:transitionFunc}
%\end{align} 
%with $x^i_{1:k} \in \llb X^i(1:k) \rrb$ and all $k \in \mathbb{Z}_{\geq 0}$.
%\end{definition} 

This general system configuration, where a common message is seen by all encoders, allows us to incorporate a form of relatedness among the channel input sequences in the model. In the case where the common message can take only one value, so that $R^0=0$, each channel input is generated in isolation and is mutually unrelated with the others. At the other extreme, if the private messages  can each take only one value so that $R^1=R^2=\cdots = R^M=0$, then the channel inputs are generated in complete cooperation. 

The encoded data sequences are then sent through a stationary memoryless MAC as depicted in Fig.~\ref{fig:M_User_MAC}. The output $Y_k\in\mcY$ of the MAC is given in terms of a fixed function $f:\mcX^1\times\cdots\times\mcX^M\times\mcZ\to\mcY$ as
\beq
Y_{k} = f(X^1_{k},X^2_{k}, \cdots ,X^M_{k},Z_k) \in\mcY,  \  \ k\in \mathbb{Z}_{\geq 0},
\label{MAC}
\eeq
where the channel noise is denoted by $Z_k$ and is mutually unrelated with all messages, and past channel noise, i.e., $Z_{1:k-1}$, $W^0$, $W^1, W^2, \cdots , W^M$. We further assume that the range $\llb Z_k\rrb = \mcZ$ is constant. 

The receiver consists of a decoder $\mathcal{D}$ that generates estimates $\hat{W}^0$, $\hat{W}^1, \cdots , \hat{W}^M $ of the transmitted messages using the channel output sequence $Y_{1:n}$. In the context of zero-error communication, these estimates must always be \textit{exactly} equal to the original messages, even with the existence of channel noise or inter-user interference. This requirement means that for any $i \in [0:M]$ the conditional range $\llb W^i|y_{1:n}\rrb$ consists of one element for any channel output sequence $y_{1:n}\in\llb Y_{1:n}\rrb$. This communication system is an extension of the nonstochastic MAC operating with two senders \cite{nairITW2019} to a more general scenario with two or more users. 

For a given code block-length $n$, the {\em operational zero-error $n$-capacity region} $\mcC_{0,n}$ of the MAC is defined as the set of rate tuples $R = (R^i)_{i=0}^M$ for which zero-error communication is possible by suitable choice of coding and decoding functions.  Note that this is well-defined  for finite block-lengths $n$, and so it is of interest in safety-critical low-latency applications. This is unlike the Shannon capacity region $\mathcal{C}$, which requires $n\to\infty$ so as to yield vanishingly small decoding error probabilities.

If we are allowed to use arbitrarily long blocks, i.e., $n \rightarrow \infty $,  then the relevant {\em zero-error capacity region} $\mcC_0$ is given by the closed union
\begin{align} \label{eq:zeCap}
\mcC_0 = \overline{\bigcup_{n \geq 1} \mathcal{C}_{0,n}},
\end{align}
and we define the notion of \textit{achievable rate} as follows. 
\begin{definition}
	A rate tuple $R = (R^0,\ R^1, \cdots,\ R^M)$ is called \emph{achievable} if there exists a sequence of $\left( \left\lceil 2^{n R^0_n}\right\rceil, \ \left\lceil 2^{n R^1_n}\right\rceil, \cdots, \  \left\lceil 2^{n R^M_n}\right\rceil, \ n \right)$ zero-error codes, with $n \in \mathbb{Z}_{\geq 1}$ and $R_n = (R^0_n,\ R^1_n,\ \cdots,\ R^M_n)  \in \mcC_{0,n}$, converging to $R$. 
\end{definition}

We have then the following result.

\begin{theorem}\label{thm:convexity}
	The zero-error capacity region $\mathcal{C}_{0}$ (\ref{eq:zeCap}) is convex. 
\end{theorem}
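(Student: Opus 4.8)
The plan is to use a standard time-sharing (concatenation) argument adapted to the zero-error setting. Fix two rate tuples $R = (R^i)_{i=0}^M$ and $R' = (R'^i)_{i=0}^M$ in $\mcC_0$, and a rational convex coefficient $\lambda = p/q \in [0,1]$; by continuity and the closure in \eqref{eq:zeCap} it suffices to handle rational $\lambda$. By definition of $\mcC_0$ there exist block-lengths $n_1, n_2$ and zero-error codes achieving (rate tuples arbitrarily close to) $R$ over $n_1$ uses and $R'$ over $n_2$ uses. First I would pass to a common ``super-block'' of length $N = q(a n_1 + b n_2)$ for suitable integers: more simply, take $N_1 = p\, m\, n_1$ uses running the first code $p m$ times in succession, followed by $N_2 = (q-p)\, m\, n_2$ uses running the second code $(q-p)m$ times, for a large integer $m$, and let $N = N_1 + N_2$. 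The concatenated encoders take as their common message the tuple of all the individual common messages used in the constituent blocks, and likewise for the private messages; since the MAC \eqref{MAC} is memoryless and the noise $Z_k$ is unrelated across uses and across all messages, a zero-error decoder for each constituent block yields a zero-error decoder for the concatenation (each sub-block of $Y_{1:N}$ determines its own messages exactly).

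The second step is the rate bookkeeping. The common-message rate of the concatenated code is
\begin{align}
\frac{1}{N}\log_2\!\big(w_{\max}^{0}(1)^{pm}\, w_{\max}^{0}(2)^{(q-p)m}\big)
= \frac{p m n_1 R^0 + (q-p) m n_2 R'^0}{p m n_1 + (q-p) m n_2},
\end{align}
and analogously for each private rate $R^j$. This is exactly the convex combination of $R^0$ and $R'^0$ with weight $\mu := p n_1/(p n_1 + (q-p) n_2)$, and the same weight $\mu$ for every coordinate simultaneously, so the concatenated code achieves $\mu R + (1-\mu) R'$. The remaining point is that $\mu$ is not the prescribed $\lambda$: one recovers an arbitrary rational $\lambda$ by choosing the repetition counts $p n_1$ and $(q-p) n_2$ to have the right ratio — i.e. replace $pm$ by $a$ copies of the first code and $(q-p)m$ by $b$ copies of the second, and solve $a n_1 / (a n_1 + b n_2) = \lambda$ for positive integers $a, b$, which is possible for any rational $\lambda$ since $n_1, n_2$ are fixed positive integers. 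Hence $\lambda R + (1-\lambda) R' \in \mcC_{0,N} \subseteq \mcC_0$ for all rational $\lambda$, and taking the closure \eqref{eq:zeCap} gives convexity for all $\lambda \in [0,1]$.

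The main obstacle is not the rate arithmetic but verifying that the \emph{zero-error} property genuinely composes under concatenation in the uncertain-variable model: one must check that running codes back-to-back does not introduce relatedness that breaks the conditions underlying $\mcC_{0,n}$. Concretely, I would argue directly at the operational level (exact recoverability of every $W^i$ from $y_{1:N}$) rather than through Theorem~\ref{conjecture:DMAC}, invoking only that $Z_{1:N}$ is mutually unrelated with all constituent messages and across uses, which is inherited from the memorylessness assumption on \eqref{MAC}; then each sub-block inherits a valid zero-error code and the union over the sub-block decoders is a valid zero-error decoder for the whole. A secondary nuisance is the slack between ``achievable rate'' and ``rate achieved exactly'': since $2^{n R^i}$ must be integers, I would fix this by approximating $R, R'$ from below by achievable tuples with the requisite integrality, carrying out the above, and then using the closure in the definition of $\mcC_0$ to remove the approximation — so no separate continuity lemma is needed beyond what \eqref{eq:zeCap} already builds in.
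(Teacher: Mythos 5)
Your proof is correct and follows essentially the same route as the paper's: a time-sharing (concatenation) argument on the operational definition of $\mathcal{C}_{0,n}$, with rational choice of the repetition counts to hit the desired convex weight and the closure in \eqref{eq:zeCap} absorbing the remaining approximation. If anything, your write-up is more careful than the paper's on the two points it glosses over — that the zero-error property genuinely composes block-by-block under memorylessness and unrelated noise, and the integrality/density bookkeeping for the achievable weights.
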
 

\begin{proof}
See Appendix \ref{sec:proofConvexity}.	
\end{proof}
%%%%%%%%%%%%%%%%%%%%%%%%%%%%%%%%%%%%%%%%%%%%%%%%%
%\tikzstyle{block} = [draw, fill=white, rectangle, minimum height=4em, minimum width=3em]
%\tikzstyle{input} = [coordinate]
%\tikzstyle{output} = [coordinate]
%\tikzstyle{pinstyle} = [pin edge={to-,thin,black}]
\begin{figure}%[htbp]
\centering
\scalebox{.75}{
\begin{tikzpicture}[auto, node distance=2cm,>=latex'] [scale=1]
% ----------------------------------------------------
% Encoders
% ----------------------------------------------------
\node[block, name=E1] [minimum width=8mm, minimum height=8mm] (E1) {$\mathcal{E}^1$};      
\node[block, below = .2cm of E1] [minimum width=8mm, minimum height=8mm] (E2) {$\mathcal{E}^2$};
\node[block, below of = E2] [minimum width=8mm, minimum height=8mm] (EM) {$\mathcal{E}^M$};
\node at ($(E2)!.5!(EM)$) {\vdots};
% ----------------------------------------------------
% Sources
% ----------------------------------------------------
\node[block, left = 1.3cm of E1] [minimum width=17mm, minimum height=8mm] (Source1) {Source 1};
\node[block, above =.2cm of Source1] [minimum width=17mm, minimum height=8mm] (Source0) {Source 0};
\node[block, below =.2cm of Source1] [minimum width=17mm, minimum height=8mm] (Source2) {Source 2};
\node[block, below of = Source2] [minimum width=17mm, minimum height=8mm] (SourceM) {Source $M$};
\node at ($(Source2)!.5!(SourceM)$) {\vdots};
% ----------------------------------------------------
% Channel & Decoder
% ----------------------------------------------------
\node[inner sep=0,minimum size=0, below = .5cm of E1] (cent) {};  
\node[block, right = 1.5cm of cent] [minimum width=10mm, minimum height=50mm, pin =   
{[pinstyle]above:$Z_{k}$}] (MAC) {MAC};
\node[block, right = 1cm of MAC] [minimum width=10mm, minimum height=50mm] (D) {$\mathcal{D}$};
% ----------------------------------------------------
% Invisible Nodes
% ----------------------------------------------------
\node [inner sep=0,minimum size=0, left of=E1] (k1) {};  
\node [inner sep=0,minimum size=0, left of=E2] (k2) {};  
\node [inner sep=0,minimum size=0, right =1cm of Source0](k0) {}; 
\node [circle,fill,inner sep=.7pt, below =.75cm of k0]    (l1) {};  
\node [circle,fill,inner sep=.7pt, below =.9cm of l1]     (l2) {};
\node [inner sep=0,minimum size=0, below =1.9cm of l2]  (l3) {};  
\node [input, left = 1.3cm of E1] (input1) {};
\node [input, left = 1.3cm of E2] (input2) {};
\node [input, left = 1.3cm of EM] (inputM) {};
\node [output, right = 1cm of D.107]  (output1) {};
\node [output, right = 1cm of D.120]  (output2) {};
\node [output, right = 1cm of D]      (output3) {};
\node [output, right = 1cm of D.-104] (outputM) {};
\node at ($(output3)!.4!(outputM)$) {\vdots};
% ----------------------------------------------------
\draw [draw,->] (input1) -- node {$W^1$} (E1);
\draw [draw,->] (input2) -- node {$W^2$} (E2);
\draw [draw,->] (inputM) -- node {$W^M$} (EM);
\draw [draw,-]  (Source0)-- node {$W^0$} (k0);
\draw [draw,-]  (k0)   -- (l3);
\draw [draw,->] (l1)   -- (E1.152);
\draw [draw,->] (l2)   -- (E2.149);
\draw [draw,->] (l3)   -- (EM.144);
\draw [draw,->] (E1)  -- node {$X^1_{k}$} (MAC.119);
\draw [draw,->] (E2)  -- node {$X^2_{k}$} (MAC.190);
\draw [draw,->] (EM)  -- node {$X^M_{k}$} (MAC.-104);
\draw [draw,->] (MAC) -- node {$Y_{k}$}   (D);
\draw [->] ($(D.107)+(1,0)$)  -- node {$\hat{W}^0$}($(output1)+(1,0)$);
\draw [->] ($(D.120)+(1,0)$)  -- node {$\hat{W}^1$}($(output2)+(1,0)$);
\draw [->] (D)                -- node {$\hat{W}^2$}(output3);
\draw [->] ($(D.-104)+(1,0)$) -- node {$\hat{W}^M$}($(outputM)+(1,0)$);
\end{tikzpicture}}
\caption{The $M$-user MAC system with a common message $W^0$ operating at time instant $k$.}
\label{fig:M_User_MAC}
\end{figure}
%%%%%%%%%%%%%%%%%%%%%%%%%%%%%%%%%%%%%%%%%%%%%%%%%%%%%%%%%%
\subsection{Zero-Error Capacity of $M$-User MAC via Nonstochastic Information}
%%%%%%%%%%%%%%%%%%%%%%%%%%%%%%%%%%%%%%%%%%%%%%%%%%%%%%%%%%
At this stage, we use nonstochastic information to  establish a multi-letter characterization of the zero-error $n$-capacity region $\mathcal{C}_{0,n}$ for the $M$-user MAC. 
\begin{theorem}\label{conjecture:DMAC}
For a given block-length $n\geq 1$, let  $\mathcal{R}(U,X^1_{1:n}, X^2_{1:n}, \cdots , X^M_{1:n})$ be the set of rate tuples $(R^0,R^1,R^2, \cdots , R^M)$ such that
\begin{align}
n R^0 & \leq  I_{*}[U ;  Y_{1:n}], \label{ineq:1}\\
n R^j & \leq  I_{*}[X^j_{1:n};Y_{1:n} \vert U], \ \forall j \in [1:M], \label{ineq:2}
\end{align}
where $2^{nR^i}$, $i\in[0:M]$, are positive integers,  $X^j_{1:n}$, $j \in [1:M]$, are sequences of inputs to the $M$-user MAC \eqref{MAC}, $Y_{1:n}$ is the corresponding channel output sequence, and $U$ is an auxiliary uv. 

Then, the operational zero-error $n$-capacity region $\mathcal{C}_{0,n}$ of the $M$-user MAC over $n$ channel uses coincides with the union of the regions $\mathcal{R}(U,X^1_{1:n}, X^2_{1:n}, \cdots , X^M_{1:n})$ over all uv's $U,X^1_{1:n}, X^2_{1:n}, \cdots , X^M_{1:n}$ such that 
\begin{itemize}
\item[i)] the sequences $X^j_{1:n}$, $j=1,\ldots , M$, are conditionally unrelated given $U$, 

\item[ii)] $ U \leftrightarrow \lp X^1_{1:n}, X^2_{1:n}, \cdots , X^M_{1:n} \rp  \leftrightarrow Y_{1:n}$ form a Markov uncertainty chain.
%, where $Y_{1:n}$ satisfies the memoryless channel law
%\begin{align}
%	\llb Y_k \vert x^1_{1:k}, \cdots, x^1_{1:k}, u \rrb = \mathbf{T}(x^1_k , \cdots, x^M_k),
%\end{align}
%for all valid $x^i_{1:n}, u$ and all $k \in [1:n]$, with $\mathbf{T}$ is a forward channel transition function as defined in (\ref{eq:transitionFunc}). 
\end{itemize}
\end{theorem}   

\begin{proof}
See Section \ref{sec:ProofTheoMAC}.
\end{proof}
	
\begin{remark}
In case of removing the common message $W^0$, $\mcR (X_{1:n}^1, X_{1:n}^2, \cdots , X_{1:n}^M)$ becomes the set of the rate tuples $(R^1,R^2, \cdots , R^M)$ such that 
\begin{align}
n R^j & \leq  I_{*}[X^j_{1:n};Y_{1:n}], \ j \in [1:M], \label{ineq:3}
\end{align}
with $2^{nR^j}$ being positive integers,  $X^j_{1:n}$ referring to the input signals fed into the $M$-user MAC \eqref{MAC} and $Y_{1:n}$ corresponding to the channel output.
The operational zero-error $n$-capacity region $\mathcal{C}_{0,n}$ of the $M$-user MAC over $n$ channel uses in this case coincides with the union of the regions $\mathcal{R}(X^1_{1:n}, X^2_{1:n}, \cdots , X^M_{1:n})$ over all uv's $X^1_{1:n}, X^2_{1:n}, \cdots , X^M_{1:n}$ such that the sequences $X^j_{1:n},\ \forall j \in [1:M]$, are mutually unrelated.

Note that (\ref{ineq:3}) is almost identical to the multi-letter characterization of the {\em ordinary} capacity region of a 2-sender MAC \cite[Thm 4.1]{gamal12}, apart from the use of $I_*$ instead of mutual information.
\end{remark}

\begin{remark} 
In the extreme case of setting $R^1 = \cdots = R^M =0$ bits, whereas the rate $R^0$ is strictly positive, all users $\mcE^1, \cdots , \mcE^M$ transmit the same message, namely the common message $W^0$. Under these circumstances, (\ref{ineq:2}) is trivially satisfied and by applying the \emph{data processing inequality} \cite{nair2013} on the Markov uncertainty chain $ U \leftrightarrow \lp X^1_{1:n}, X^2_{1:n}, \cdots , X^M_{1:n} \rp  \leftrightarrow Y_{1:n}$, it can be seen that the maximum rate $R^0$ is achieved when $U = \left( X^1_{1:n}, \cdots , X^M_{1:n} \right)$. 
\end{remark}
	
\begin{remark} 
In contrast to \cite{nairITW2019}, where the two-transmitter MAC system with a common message was considered, this work provides a coding theorem for the general case of $M$-user MAC. This result is a nonstochastic analogue of Shannon information-theoretic characterization of the ordinary capacity region \cite{slepianWolf1973}. Apart from its multi-letter nature and the use of nonstochastic rather than Shannon information, there are several other notable differences. Firstly, the inequalities are on each individual rate, and not the sum rate. Secondly, the conditioning argument in (\ref{ineq:2}) involves only the auxiliary variable U, and not any of the channel input sequences. Finally, this characterization is valid for finite block-lengths $n$, not just as $n\to\infty$. 

These differences originate from the zero-error requirement on our system, as well as the definition of conditional $I_*$. Note also that the lack of a sum-rate bound does not imply that senders can transmit with rates orthogonal to the others. This is because the zero-error $n$-capacity region is a union of the regions $\mathcal{R}(U, X^1_{1:n},\ldots , X^M_{1:n})$ over the set of all uv's satisfying the conditions in Theorem \ref{conjecture:DMAC}. Though each of these regions is a hypercube aligned with the rate axes, their union may be more complicated than a hypercube.
\end{remark}

% -------------------------------------------------------------
\subsection{Proof of Theorem \ref{conjecture:DMAC}}
\label{sec:ProofTheoMAC}
% -------------------------------------------------------------

% -------------------------------------------------------------
\subsubsection{Converse}
% -------------------------------------------------------------

Consider the $M$-user MAC model defined in \eqref{MAC} and let $\left\lbrace R^i \right\rbrace_{i = 0}^{M}$ \eqref{ratedef} be the rates of some zero-error code \eqref{code} with block-length $n$. Furthermore, we set the uv $U=W^0$. By assumption, the messages $\left\lbrace W^i \right\rbrace_{i = 0}^{M}$ are mutually unrelated and hence from \eqref{code} we conclude that the codewords satisfy
\begin{align}
\prod_{j=1}^{M} \llb X_{1:n}^{j} \vert U \rrb = \llb X_{1:n}^{1}, \cdots , X_{1:n}^{M} \vert U \rrb.
\end{align}
Additionally, the unrelatedness of the channel noise $Z$ with the messages $W^i, \ \forall i \in [0:M],$ implies that $Z$ is also unrelated with the codewords $\left\lbrace X^j \right\rbrace_{j = 1}^{M}$. Thus, the Markov chain $Y_{1:n} \lra \left( X^1_{1:n}, X^2_{1:n}, \cdots X^M_{1:n} \right)  \lra U$ is satisfied.

As zero-error communication is assumed, the existence of a decoding function $\mcD^0$ such that 
\begin{align}
W^0 = \mcD^0(Y_{1:n}), 
\end{align}
is then guaranteed. Moreover, since $U=W^0$ it can be directly deduced that $W^0$ is a common variable (cv) (Def.~\ref{defCommonVariable}) between $U$ and $Y_{1:n}$. The maximal cv property of $I_*$ yields the following
\beq
nR^0\equiv \log_2|\llb W^0\rrb| \leq I_*[U; Y_{1:n}].
\eeq
This proves expression \eqref{ineq:1} of Theorem \ref{conjecture:DMAC}.
% -------------------------------------------------------------
% Inequalities (2) and (3)
% -------------------------------------------------------------
Next we show inequality (\ref{ineq:2}) for $j \in [1:M]$. Firstly, note that given a specific realization $W^0 = w^0$ of the common message, there exists a unique message $w^1$ associated with the channel codeword $x^1_{1:n}$. This observation follows also from the zero-error property of the chosen code. In fact, if different realizations $W^1 = w^1$ were mapped to the same codeword, then zero-error decoding would obviously be impossible and the assumption would have been violated.
Thus, there certainly exists a function $g^j$ such that
\beq
W^j = g^j(X^j_{1:n}, W^0), \ \text{for} \ j \in [1:M].
\eeq 
Moreover, by the zero-error property there is indeed a decoding function $\mcD^j$ such that 
\beq
W^j = \mcD^j(Y_{1:n}), \ \forall j \in [1:M]. 
\eeq 
Hence, we conclude that the uv $W^j$ is a cv between $(X^j_{1:n}, W^0)$ and $(Y_{1:n}, W^0)$.  
Recall that in the considered MAC model the private messages $W^j$ are unrelated with $U=W^0$ for all $j \in [1:M]$. Therefore, the interpretation of conditional $I_*$ in terms of maximal cv's results in
\begin{align}
nR^j \equiv \log_2|\llb W^j\rrb|   &\leq I_*[X^j_{1:n}; Y_{1:n}|W^0] \nonumber \\
&= I_*[X^j_{1:n}; Y_{1:n}|U], \ \ \forall j \in [1:M], 
\end{align}
proving  \eqref{ineq:2}.

% -------------------------------------------------------------
\subsubsection{Achievability}
% -------------------------------------------------------------

The achievability proof is established by showing that if we have a set of uv's $U$, $X_{1:n}^j, \ \forall j \in [1:M]$ for some $n \geq 1$ such that the outlined requirements in Theorem \ref{conjecture:DMAC} are fulfilled, then it is possible to construct a zero-error coding scheme at rates achieving equality in \eqref{ineq:1}-\eqref{ineq:2}.

\paragraph{Codebook Generation}

We firstly fix the rate $R^0$ such that $ R^0 \leq (I_*[ U; Y_{1:n}])/n$.
%\begin{align}
%R^0 \leq (I_*[ U; Y_{1:n}])/n. 
%\label{eq:rateLargest}
%\end{align}
Next, select one point from each set of the family $\llb U \vert  Y_{1:n}\rrb_* $. We then denote the chosen points $u(w^0)$ with $w^0 = \left\lbrace 1, \ldots , 2^{nR^0} \right\rbrace$.

Since $nR^j = I_*[X^j_{1:n};Y_{1:n}|U]$ for $j \in [1:M]$, \eqref{defcondinfo} means that the following inequality holds
\beq
2^{nR^j} \leq \left | \llb X^j_{1:n} | Y_{1:n}, U=u(w^0) \rrb_* \right |, \ \forall j \in [1:M],
\label{RiIneq}
\eeq
with $w^0 \in [1:2^{nR^0}]$. It is therefore possible to select  $2^{nR^j}$ distinct codewords $x^j_{1:n}$ from $\llb X^j_{1:n} | U=u(w^0) \rrb$ for any realization $w^0$ such that each nonempty set of the overlap partition $\llb X^j_{1:n} | Y_{1:n}, U=u(w^0) \rrb_*$ contains exactly one element. Subsequently, these codewords denoted as $\mcE^j(w^0,w^j)$ for $w^j \in [1:2^{nR^j}]$ correspond to the coding laws \eqref{code} where $j \in [1:M]$.

\paragraph{Zero-Error Decoding}

At this stage of the proof, we show that it is possible to achieve zero decoding errors using the presented scheme. 

Firstly, recall that the uv's $U$ and $ \left\lbrace X_{1:n}^{j} \right\rbrace_{j=1}^{M}$ satisfy 
\begin{align}
\prod_{j=1}^{M} \llb X_{1:n}^{j} \vert U \rrb = \llb X_{1:n}^{1}, \cdots , X_{1:n}^{M} \vert U \rrb.
\end{align}
Then, the $n$-tuples of the decoded codewords $\left( \mcE^1 \left(w^0 , w^1 \right), \cdots ,  \mcE^M \left(w^0 , w^M \right)  \right)$ with $w^i \in [1:2^{nR^i}]$ and $\forall i \in [1:M]$ certainly belong to the conditional joint range $\llb X^1_{1:n}, \cdots ,  X^M_{1:n}|  U=u(w^0) \rrb$. This means that any combination of $w^0, w^1, \cdots , w^M$ is mapped to a valid point lying within $\llb X^1_{1:n}, \cdots , X^M_{1:n}, U\rrb$.   
At the receiver, the decoding procedure consists of $M + 1$ stages:
\begin{itemize}
\item[(1)] Firstly, the decoder determines the transmitted common message $w^0$. By construction,  each of the $2^{nR^0}$ points $u(w^0)$ is inside a separate set of the family $\llb U| Y_{1:n}\rrb_*$. Furthermore, recall that the cv property of the overlap partition implies that each set in the family $\llb  U \vert Y_{1:n} \rrb_*$ containing $u$ also uniquely specifies the matching set in $\llb  Y_{1:n}| U\rrb_*$ that contains $y_{1:n}$. Hence, the common message $w^0$ is decoded with zero error. 
\item[(2)] After having found $w^0$, it is now possible to determine which set of the conditional overlap partition $\llb Y_{1:n} | X^1_{1:n}, U=u(w^0) \rrb_*$ contains the sequence $y_{1:n}$. In a similar way as step (1), this set uniquely determines the corresponding set of the family $\llb X^1_{1:n} | Y_{1:n}, U=u(w^0) \rrb_*$ where the codeword $\mcE^1(m^0,m^1)$ lies. Since at most one codeword has been selected from each set of this family for each realization $w^0$, then the private message of user 1, namely $w^1$, is uniquely decoded.  
\item[(3)] In the subsequent $M-1$ stages, the decoder repeats step (2) with $ x_{1:n}^j$ for $j \in \left[ 2 : M \right] $ and similarly recovers $w^j$ with zero error.  
\end{itemize}
%Furthermore, by virtue of Theorem \ref{thm:convexity}, it is enough to consider the largest rate in (\ref{eq:rateLargest}). Observe that the trivial rate zero is clearly achievable. 
Thus, the achievability of Theorem \ref{conjecture:DMAC} is established.
\begin{remark}
	Note that it suffices to take the cardinality of the auxiliary uv $\vert \llb U \rrb \vert \leq \min  \lbrace \vert\llb Y_{1:n}\rrb \vert,\prod_{i=1}^{M}\vert\llb X^i_{1:n}\rrb\vert\rbrace$. 
	To see this, note from the proof above that $U$ can be taken as the common message $W^0$ in a zero-error code for the MAC, which is unambiguously determined by $Y_{1:n}$. Hence the constraint  $|\llb U \rrb | \leq \left | \llb Y_{1:n}\rrb \right |$ may be imposed. Furthermore, $W^0$ is also uniquely determined by the tuple $\left ( X^1_{1:n},\ldots , X^M_{1:n} \right )$  of channel input sequences; otherwise some valid combination of channel input sequences would be associated with multiple $W^0$, violating the zero-error property.  Hence we also need only consider $U$ with $|\llb U \rrb | \leq \prod_{i=1}^M \left| \llb X^i_{1:n}\rrb \right|$. Putting these two constraints together yields the given bound.	
\end{remark}
\begin{remark}
	Note that the achievability proof above is not intended to yield a practical way to find zero-error codes, but rather just prove their existence under the conditions of Theorem \ref{conjecture:DMAC}. 
	We leave it as future work to explore how nonstochastic information ideas could be exploited to construct zero-error codes in practice.
\end{remark}
% --------------------------------------------
\section{Distributed State Estimation over Nonstochastic MAC}
\label{sec:distributedStateEstimation}
% -------------------------------------------------------------

%We are now in a position to present and prove the main result of this paper. 

\subsection{Problem Formulation}

We now consider three discrete LTI dynamical systems characterized by the following system equations for $i \in \lbrace 0,1,2 \rbrace $:
\begin{subequations}
	\begin{align}
	X^i(k+1) &= A^i X^i(k) + V^i(k) \in \mathbb{R}^{d_i}, \label{originalState} \\
	Y^i(k)   &= C^i X^i(k) + W^i(k) \in \mathbb{R}^{b_i}, \label{originalOutput}
	\end{align} 
\end{subequations}
where the uv's $V^i(k)$ and $W^i(k)$ denote process and measurement noise at time instant $k \in \mathbb{Z}_{\geq 0}$. Before being transmitted, the plant output sequences are first encoded into channel input signals  $S^1(k), S^2(k)$ via the coding functions $\gamma^l$ as   
\begin{align}\label{eq:encoding}
	S^l(k) = \gamma^l (k, Y^0(0:k), Y^l(0:k)), \ l \in \lbrace 1,2 \rbrace.
\end{align}
Note that the outputs of System 0 are available to both encoders, whereas Systems 1 and 2 are observed only by their respective users. The encoded data sequences are then sent through a stationary memoryless two-user MAC (\ref{MAC}) as shown in Fig.~\ref{fig:setup}. The received symbol $Q(k)$ is the output of a fixed function $f : \mathcal{S}^1 \times  \mathcal{S}^2 \times \mathcal{Z} \rightarrow \mathcal{Q}$,
\begin{align}\label{eq:MACoutput}
	Q(k) = f(S^1(k), S^2(k), Z(k)) \in \mathcal{Q}, \quad k \in\mathbb{Z}_{\geq 0},
\end{align}
where $Z(k)$ is the channel noise at time $k$. At the receiver side, the symbols are used to generate an estimation $\hat{X}(k) := \begin{pmatrix} \hat{X}^0(k), \hat{X}^1(k), \hat{X}^2(k) \end{pmatrix}^T$ of the original plant states  $X(k) := \begin{pmatrix}  X^0(k) , X^1(k) ,X^2(k) \end{pmatrix}^T$ by means of decoder $\delta$, i.e., 
\begin{align}\label{eq:decoding}
\hat{X}(k) = \delta (k, Q(0:k)) = \begin{pmatrix}  \delta^0 (k, Q(0:k)) \\  \delta^1 (k, Q(0:k)) \\  \delta^2 (k, Q(0:k)) \end{pmatrix} \in \mathbb{R}^d, 
\end{align}
where $d = \sum_{i=0}^2 d_i$ and, for $k=0$, the initial estimate $\hat{X}(0) = 0$. The prediction error is denoted by the uv
\begin{align}
E(k) := \begin{pmatrix} E^0(k) \\ E^1(k) \\ E^2(k) \end{pmatrix} = \begin{pmatrix} X^0(k) -\hat{X}^0(k)\\ X^1(k)-\hat{X}^1(k) \\  X^2(k)-\hat{X}^2(k) \end{pmatrix} \in \mathbb{R}^d.  
\label{eq:estError}
\end{align}
Consider now the following definition.
\begin{definition}[Uniformly Bounded Errors]
For any noise ranges $\llb V(k)\rrb$ and $\llb W(k)\rrb$ with $\sup_{k\geq 0}||\llb V(k)\rrb||<\infty$ and $\sup_{k\geq 0}||\llb W(k)\rrb||<\infty$, $\exists l>0$ such that for any initial condition with range $\llb X(0)\rrb$, that lies within the closed ball $\mathbf{B}_l\subseteq\mathbb{R}^d$ centered at the origin and with radius $l$, it holds 
\begin{align}
\sup_{k\geq 0} \llb \Vert  E(k) \Vert \rrb  =  \sup_{k\geq 0} \llb \Vert  X(k) -  \hat{X}(k) \Vert \rrb < \infty.
\label{eq:uniformBoundedness} 
\end{align}  
%where the supremum is over all $k\geq 0$ and all valid noise and initial state realizations. 
\label{def:unifBoundError}	
\end{definition}
\vspace*{-3mm}
The aim is to design the coder-estimator tuple $(\gamma^1, \gamma^2, \delta)$ such that the resulting estimation error is \emph{uniformly bounded} in the sense of Def.~\ref{def:unifBoundError} with respect to the infinity norm. We impose the following assumptions $\forall i,j \text{ and } h \in \lbrace 0,1,2 \rbrace$ and $\forall k,t\in \mathbb{Z}_{\geq 0}$.

\begin{itemize}
\item[\textbf{A1:}] Each matrix pair $(C^i, A^i)$ is observable. 
\item[\textbf{A2:}] The noise terms $V^i(k), W^i(k)$ are uniformly bounded, i.e., $\sup_{k\geq 0}\llb \|V(k)\|\rrb, \sup_{k\geq 0} \llb\|W(k)\|\rrb < \infty$, where $V(k) := \begin{pmatrix} V^0(k), V^1(k) ,V^2(k) \end{pmatrix}^T$ and $W(k) := \begin{pmatrix} W^0(k), W^1(k) ,W^2(k) \end{pmatrix}^T$.
\item[\textbf{A3:}] The initial states $X^i(0)$ and the noise terms $V^j(k), W^h(t)$ are all mutually unrelated, $\forall i,j,h\in \lbrace 0,1,2 \rbrace$ and $\forall k,t\geq 0$. 
\item[\textbf{A4:}] The channel noise signal $Z(k)$ is unrelated with the combined initial state and noise signals $(X(0), V(0:k-1), W(0:k))$.    
\item[\textbf{A5:}] Each system matrix $A^i$ has only strictly unstable eigenvalues, i.e., $\vert \lambda^i_{\ell} \vert > 1$, $\ell = 1, \ldots, d_i$.  
\item[\textbf{A6:}] The zero signal is a valid realization of measurement and process noise, i.e., $0 \in \llb V^i \rrb, \llb W^i \rrb$.
\end{itemize}

\begin{remark}
Assumption \textbf{(A5)} is imposed mainly for the sake of conciseness.  If $A^i$ were allowed to also have strictly stable eigenvalues, then it is straightforward to show that the state components associated with these eigenvalues are uniformly bounded over time. Hence the trivial zero estimator for these state components would yield estimation errors that are uniformly bounded. Thus, without loss of generality, strictly stable eigenvalues and their associated state components may be omitted from each $A^i$.

Eigenvalues with magnitude exactly equal to 1 need more careful treatment. This is because they lead to polynomial rather than exponential growth with time. As such, more delicate techniques than those used in this paper are required. For reasons of space, we therefore also exclude such eigenvalues.
\end{remark}

\begin{remark}
	Notice that although the formulation with three decoupled subsystems may seem special, it will be shown later in Section \ref{subsec:Generality} that this setup is reasonably general for the case of noiseless linear systems. 
\end{remark}
% --------------------------
\subsection{Main Result}
% ----------------------------
We present the main result of this section. First, for each subsystem matrix $A^i\in\RR^{d_i\times d_i}$, let the {\em topological entropy} be
\beq
h_i := \sum_{1\leq \ell \leq d_i: |\lambda_\ell^i| > 1} \log_2|\lambda_\ell^i |. 
\eeq

\begin{theorem} \label{th:StateEstimation}
Consider the linear time-invariant systems in
(\ref{originalState})-(\ref{originalOutput}) whose outputs are coded (\ref{eq:encoding}) and estimated  (\ref{eq:decoding}) via the two-input single output MAC (\ref{eq:MACoutput}). Suppose Assumptions (A1)-(A6) hold. If there exists a coder-estimator tuple $(\gamma^1 , \gamma^2 , \delta )$ yielding uniformly bounded estimation errors (Def. \ref{def:unifBoundError}), then
\begin{align}
h := \begin{pmatrix} h_0, h_1, h_2  
\end{pmatrix}^T \in  \mathcal{C}_{0},
\label{conj}
\end{align} 
where $h$ is the vector of topological entropies of the corresponding systems and $\mathcal{C}_{0}$ refers to the zero-error capacity region (\ref{eq:zeCap}) of the channel.

Furthermore, if $h\in \mathrm{int} \left(\mathcal{C}_0 \right)$, then a coder-estimator tuple that achieves uniformly bounded state estimation errors can be constructed. 	
\end{theorem}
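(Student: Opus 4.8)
The plan is to prove the two implications separately, reducing each to the operational definition of the zero-error $n$-capacity region together with a classical data-rate/volume-counting argument. In both directions the structural bridge is that the \emph{common} plant (system $0$, observed by both sensors) plays the role of the \emph{common message} $W^0$ of the MAC, while the two \emph{private} plants play the roles of the private messages $W^1,W^2$; with this correspondence in place, only the definition \eqref{eq:zeCap} of $\mathcal C_0$ as the closure of $\bigcup_{n\ge1}\mathcal C_{0,n}$, the fact that $\mathcal C_{0,n}$ collects exactly the rate tuples admitting a zero-error code, and the closedness of $\mathcal C_0$ are needed --- Theorem~\ref{conjecture:DMAC} itself enters only to make the condition $h\in\mathcal C_0$ checkable.

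For necessity, suppose $(\gamma^1,\gamma^2,\delta)$ keeps $\sup_k\llb\|E(k)\|\rrb\le B<\infty$ over an $l_\infty$-ball $\mathcal B$ of initial states, and fix $\eps>0$ and a large block-length $n$. Using A6 I would set all process and measurement noises identically to $0$, so that $Y^i(k)=C^i(A^i)^kX^i(0)$ and every channel input of \eqref{MAC} over the $n$ uses becomes a deterministic function of $(X^0(0),X^l(0))$ alone. A packing estimate (a maximal $3B$-separated subset of a set $S$ is a $3B$-net of $S$, so $S$ is covered by that many boxes of side $6B$, whence the subset has size at least $\mathrm{vol}(S)/(6B)^{\dim}$) shows that, for $n$ large and each $i$, the ball $\mathcal B$ contains $N_i:=\lceil 2^{n(h_i-\eps)}\rceil$ points differing only along the real generalized unstable eigenspace of $A^i$ whose $(A^i)^n$-images --- a set of volume $2^{nh_i}\cdot\mathrm{const}$ --- are pairwise at $l_\infty$-distance at least $3B$. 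I would then restrict $\Omega$ so that $X^i(0)$ ranges over exactly this set and let $W^i\in[1:N_i]$ be its index; by A3--A4 the $W^i$ are mutually unrelated and unrelated with the channel noise. The tuple $(W^0,W^1,W^2)$, the encoders $\mathcal E^j(w^0,w^j):=$ the input sequence $\gamma^j$ generates from the representatives indexed by $w^0$ and $w^j$, and the decoder returning $\hat W^i=$ the index of the unique representative whose $(A^i)^n$-image lies within $B$ of $\delta^i(n,Q(0\!:\!n))$ (well defined and correct because $\|E^i(n)\|\le B$ and the images are $3B$-separated) then form a legitimate zero-error MAC code, so $\big(\tfrac{\log N_0}{n},\tfrac{\log N_1}{n},\tfrac{\log N_2}{n}\big)\in\mathcal C_{0,n}\subseteq\mathcal C_0$. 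Holding $\eps$ fixed and letting $n\to\infty$, this tuple converges to $(h_0-\eps,h_1-\eps,h_2-\eps)$, which therefore lies in the closed set $\mathcal C_0$; finally $\eps\downarrow0$ gives $h\in\mathcal C_0$.

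For sufficiency, suppose $h\in\mathrm{int}(\mathcal C_0)$. By \eqref{eq:zeCap} there are a block-length $n$ and a tuple $R\in\mathcal C_{0,n}$ with $R^i>h_i$ for all $i$, and since each $2^{nR^i}$ is a positive integer there is no rounding loss; I would fix a zero-error MAC code $(\{\mathcal E^j\},\mathcal D)$ at these rates, which every $n$ channel uses delivers error-free a common index in $[1:2^{nR^0}]$ and private indices in $[1:2^{nR^j}]$. The coder-estimator would be a block scheme on super-blocks formed of $m$ consecutive uses of this code, with $m$ fixed and large. After an initial transient of $\max_i d_i$ steps --- during which A1--A2 give bounded-error estimates of all observed initial states, hence of the current plant states --- in each super-block every sensor computes, deterministically from $Y^0$ for plant $0$ and from $Y^j$ for plant $j$, the index of a sub-box refining its current uncertainty region on the unstable modes, submits the plant-$0$ index as the common message (identical at both sensors, since both observe $Y^0$) and its plant-$j$ index as its private message, and the receiver inverts $\mathcal D$ to recover all indices error-free and propagates its estimates through the $A^i$. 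Over one super-block the unstable uncertainty volume of plant $i$ grows by $2^{mnh_i}$ times a polynomial Jordan-block factor while $2^{mnR^i}$ refining boxes are available; since $R^i>h_i$, for $m$ large the radius recursion $r\mapsto\rho_i r+c_i$ has $\rho_i<1$ and bounded drive $c_i$ (from A2), so the errors stay uniformly bounded at super-block endpoints and grow by at most $\|A^i\|^{mn}$ between them, which yields \eqref{eq:uniformBoundedness}.

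I expect the main obstacle to be the quantitative volume/counting step in each direction: for necessity, producing $\sim2^{n(h_i-\eps)}$ reachable states whose $(A^i)^n$-images are $\Omega(B)$-separated when the unstable spectrum is complex and/or defective, and confirming that the reduction is a genuine instance of the MAC zero-error problem even though the true encoders see exact plant outputs rather than the quantized indices; for sufficiency, designing the refining quantizers for the three coupled plants and checking that the induced joint uncertainty recursion contracts uniformly under worst-case bounded disturbances. The remaining, conceptual step --- matching the common plant to the common message, invoking the operational meaning of $\mathcal C_{0,n}$, and using the closedness of $\mathcal C_0$ --- is the same in both parts.
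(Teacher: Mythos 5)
Your sufficiency argument is essentially the paper's: both fix a zero-error MAC code at rates strictly above the entropies (extracted from $h\in\mathrm{int}(\mathcal C_0)$ via the density of $\bigcup_n\mathcal C_{0,n}$), and both wrap it in a down-sample/quantize/up-sample loop; the only difference is that the paper delegates the contraction of the quantizer recursion to the Tatikonda--Mitter data-rate theorem (after a Luenberger observer reduces outputs to states under A1), whereas you sketch the radius recursion directly. Your necessity argument, however, takes a genuinely different route. The paper never constructs a channel code: it builds a family of separated hypercuboids in $\llb X^i(0)\rrb$, shows via the uniform error bound and a Yamamoto singular-value estimate that this family is an overlap-isolated partition of $\llb X^i(0)\,\vert\,Q(0\!:\!k-1)\rrb$, thereby lower-bounding $I_*[X^i(0);Q(0\!:\!k-1)]$ by roughly $k h_i$, and then transfers these bounds to the channel variables through an auxiliary lemma ($I_*[\Lambda;\Omega\vert\Theta]\ge I_*[\Lambda;\Omega]$ when $\Lambda$ and $\Theta$ are unrelated) and the conditional data-processing inequality, so that the conclusion $h\in\mathcal C_0$ is read off from the information-theoretic characterization of Theorem~\ref{conjecture:DMAC}. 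You instead perform an operational reduction: zero out the disturbances (A6), pack $\sim 2^{n(h_i-\eps)}$ initial states whose $(A^i)^n$-images are $3B$-separated, and exhibit the coder-estimator itself as a zero-error MAC code whose rate tuple lies in $\mathcal C_{0,n}$ by definition, closing with the closedness of $\mathcal C_0$. Your route buys independence from Theorem~\ref{conjecture:DMAC} and from the $I_*$ machinery (Lemma~\ref{lemma:conditionalIstar} and conditional data processing are not needed), and it makes the role of the common plant as the common message completely explicit; the paper's route buys reusable intermediate information inequalities and avoids the volume-packing estimate for defective or complex unstable spectra, replacing it with the per-axis subinterval construction in Jordan coordinates. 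The separation constants also differ ($3B$ versus the paper's $2\sqrt{n_i}\xi$ threshold), but both serve the same purpose of forcing distinguishability at the decoder. I see no gap in your reduction beyond the packing details you already flag; both implications stand.
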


\begin{proof}
See Section \ref{sec:proofs}.
\end{proof}

\begin{remark}
This result is an extension of \cite{matveevSavkin2007, nair2013}, from centralized LTI systems with point-to-point channels, to distributed LTI systems estimated over a MAC.
\end{remark}

\begin{remark}
 The case where $h$ lies exactly on the boundary of the zero-error capacity region, i.e., $h\in\mcC_0 \setminus \mathrm{int} \left(\mathcal{C}_0 \right)$ introduces some technical issues and is not addressed here. 
\end{remark}
% ------------------------------------------------
\subsection{Proof of Theorem \ref{th:StateEstimation}}
\label{sec:proofs}
% ------------------------------------------------
\subsubsection{Necessity Proof}
\label{subsec:necessityProof}
% -------------------------------------------------------------
Without loss of generality assume that for $i \in \left\lbrace 0,1,2 \right\rbrace $ the state matrix $A^i$ of Plant $i$ is in \emph{real Jordan canonical} form, i.e., it consists of $m$ square blocks on its diagonal such that the $j$-th block is denoted by $A^i_j \in \mathbb{R}^{d_j \times d_j}$ with $j \in [1:m]$:
\begin{align}
A^i = \begin{pmatrix}
A_1^i & \boldsymbol{0}   & \cdots & \boldsymbol{0} \\
\boldsymbol{0}   & A_2^i & \cdots & \boldsymbol{0} \\
\vdots           & \vdots           & \ddots & \vdots \\
\boldsymbol{0}   & \boldsymbol{0}   & \cdots & A_m^i \\
\end{pmatrix} \in \mathbb{R}^{d_i \times d_i}.
\end{align}
In the following analysis, we consider only the unstable eigenvalues $\lbrace \lambda^i_\ell \rbrace_{\ell=1}^{d_i}$. 
In Appendix \ref{sec:proofIneq}, we establish the following inequality for $k\rightarrow\infty$ and $i \in  \lbrace 0,1,2 \rbrace$:
\begin{align}
\frac{I_*[X^i(0:k -1); Q(0:k - 1)]}{k} &\geq  \frac{I_*[X^i(0); Q(0:k - 1)]}{k} \nonumber \\
&\geq \sum_{\ell=0}^{d_i} \log \vert \lambda^i_{\ell} \vert. 
\label{eq:NecessityIneq} 
\end{align}
%\begin{align}
%\frac{I_*[X^i(0:k -1); Q(0:k - 1)]}{k} > \sum_{\ell=0}^{d_i} \log \vert \lambda^i_{\ell} \vert.
%\end{align}
Before proceeding with the proof, we present the following lemma. 
\begin{lemma} \label{lemma:conditionalIstar}
Let $\Lambda, \Omega$ and $\Theta$ denote three uv's such that $ \Lambda$ and $\Theta$ are mutually unrelated. Then, the following relationship between $I_*[\Lambda ; \Omega \vert \Theta]$ (\ref{defIstar}) and $I_*[\Lambda ; \Omega]$ (\ref{defcondinfo}) holds 
	\begin{align}
	I_*[\Lambda ; \Omega \vert \Theta] \geq I_*[\Lambda ; \Omega]. 
	\end{align}
\end{lemma}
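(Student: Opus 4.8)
The plan is to prove the inequality through the common-variable (cv) interpretations of the two quantities. Recall that $I_*[\Lambda;\Omega]$ equals the log-cardinality of the range of a \emph{maximal} cv between $\Lambda$ and $\Omega$, while—by the remark following the definition of conditional $I_*$—the quantity $I_*[\Lambda;\Omega\vert\Theta]$ equals the supremum of $\log_2\vert\llb Z\rrb\vert$ over all cv's $Z=f(\Lambda,\Theta)=g(\Omega,\Theta)$ that are, in addition, unrelated with $\Theta$. Hence it suffices to exhibit one such admissible $Z$ whose range is as large as that of a maximal cv for the pair $(\Lambda,\Omega)$.

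First I would take a maximal cv $Z$ between $\Lambda$ and $\Omega$, so that there are functions $\tilde f,\tilde g$ with $Z=\tilde f(\Lambda)=\tilde g(\Omega)$ and $\log_2\vert\llb Z\rrb\vert=I_*[\Lambda;\Omega]$. Setting $f(\lambda,\theta):=\tilde f(\lambda)$ and $g(\omega,\theta):=\tilde g(\omega)$—i.e.\ simply discarding the conditioning variable—gives $Z=f(\Lambda,\Theta)=g(\Omega,\Theta)$, so $Z$ is a cv for $(\Lambda,\Theta)$ and $(\Omega,\Theta)$. The only remaining requirement is that $Z$ be unrelated with $\Theta$; since $Z=\tilde f(\Lambda)$ depends on $\Lambda$ alone and $\Lambda,\Theta$ are mutually unrelated by hypothesis, this follows from the elementary fact that a function of a uv unrelated with $\Theta$ is itself unrelated with $\Theta$: from $\llb\Lambda,\Theta\rrb=\llb\Lambda\rrb\times\llb\Theta\rrb$ (Definition \ref{unrelatedDefn}) one gets $\llb Z,\Theta\rrb=\{(\tilde f(\lambda),\theta):\lambda\in\llb\Lambda\rrb,\ \theta\in\llb\Theta\rrb\}=\llb Z\rrb\times\llb\Theta\rrb$. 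Thus $Z$ is admissible in the maximisation defining $I_*[\Lambda;\Omega\vert\Theta]$, and $I_*[\Lambda;\Omega\vert\Theta]\ge\log_2\vert\llb Z\rrb\vert=I_*[\Lambda;\Omega]$.

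If a proof directly from the overlap-partition definitions (\ref{defIstar})--(\ref{defcondinfo}) is preferred, I would argue as follows. Fix $\theta\in\llb\Theta\rrb$; mutual unrelatedness gives $\llb\Lambda\vert\theta\rrb=\llb\Lambda\rrb$, so $\llb\Lambda\vert\Omega,\theta\rrb_*$ and $\llb\Lambda\vert\Omega\rrb_*$ are both partitions of the \emph{same} set $\llb\Lambda\rrb$. Since $\llb\Lambda\vert\omega,\theta\rrb\subseteq\llb\Lambda\vert\omega\rrb$ for every $\omega\in\llb\Omega\vert\theta\rrb\subseteq\llb\Omega\rrb$, replacing each set $\llb\Lambda\vert\omega_i,\theta\rrb$ occurring in an overlap-connection chain by its superset $\llb\Lambda\vert\omega_i\rrb$ turns it into an overlap-connection chain within the family $\llb\Lambda\vert\Omega\rrb$; hence $\llb\Lambda\vert\Omega,\theta\rrb_*$ refines $\llb\Lambda\vert\Omega\rrb_*$ and therefore has at least as many (necessarily nonempty) cells. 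Taking $\log_2$ and then the minimum over $\theta\in\llb\Theta\rrb$ yields the claim.

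I expect the main—indeed the only—subtlety to be the correct use of the unrelatedness hypothesis: it is precisely what forces the conditional and unconditional objects to sit over the common ground set $\llb\Lambda\rrb$ (equivalently, what makes the exhibited cv unrelated with $\Theta$). Without it the inequality is false: take $\Lambda$ to be a function of $\Theta$, so that $\llb\Lambda\vert\theta\rrb$ is a singleton for every $\theta$ and $I_*[\Lambda;\Omega\vert\Theta]=0$, whereas $I_*[\Lambda;\Omega]$ can be strictly positive.
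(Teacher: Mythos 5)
Your first argument is essentially the paper's own proof: both establish the inequality via the common-variable characterizations, observing that any cv $Z=\tilde f(\Lambda)=\tilde g(\Omega)$ for the unconditional problem is automatically admissible for the conditional one because $\Lambda\perp\Theta$ forces $Z\perp\Theta$, so the feasible set only grows. You actually verify the step $\tilde f(\Lambda)\perp\Theta$ from Definition~\ref{unrelatedDefn}, which the paper merely asserts, so your write-up is if anything slightly more complete. Your second, alternative argument---fixing $\theta$, using $\llb\Lambda\vert\theta\rrb=\llb\Lambda\rrb$ so that both overlap partitions live on the same ground set, and showing that $\llb\Lambda\vert\Omega,\theta\rrb_*$ refines $\llb\Lambda\vert\Omega\rrb_*$ because conditional ranges only shrink and overlap chains survive passage to supersets---is correct and genuinely different: it works directly from the partition definitions \eqref{defIstar}--\eqref{defcondinfo} without invoking the (unproved-in-this-paper) maximal-cv characterization of conditional $I_*$, at the cost of a slightly longer combinatorial argument. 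Either route suffices.
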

\begin{proof}
	See Appendix \ref{sec:appLemma}.
\end{proof}    
Since $X^0(0:k-1)$ and $X^l(0)$ are unrelated for $l \in \lbrace 1,2 \rbrace$, it follows from Lemma \ref{lemma:conditionalIstar}
\begin{align}
I_*[X^l(0);Q(0:k-1) \vert X^0(0:k-1)] \geq I_*[X^l(0);Q(0:k-1)].
\label{ineq:proofStep2}  
\end{align}
Note that $S^1(0:k-1) \leftrightarrow X^0(0:k-1) \leftrightarrow S^2(0:k-1)$, i.e., $S^1(0:k-1) \perp S^2(0:k-1) \vert X^0(0:k-1)$. Furthermore, the initial states $\left\lbrace X^i(0)\right\rbrace_{i=0}^{2}$ and additive noises $\left\lbrace W^i(k), V^i(k) \right\rbrace_{i=0}^{2} $ are mutually unrelated. Hence, the requirement \textbf{(A3)} results in the Markov chain $X^l(0) \leftrightarrow S^l(0:k - 1) \leftrightarrow Q(0:k - 1 ) \vert  X^0(0:k-1)$, for $l \in \lbrace 1,2 \rbrace $. Thus, the \textit{conditional data processing inequality} \cite{nair2015} yields 
\begin{align}\label{dataProcessingIneq}
I_*[X^l(0);Q&(0:k - 1) \vert X^0(0:k-1)]  \leq 	 \nonumber \\
&I_*[S^l(0:k - 1); Q(0:k - 1)\vert X^0(0:k-1)].                       
\end{align}  
By combining this lower bound on $I_*[S^l(0:k - 1); Q(0:k - 1)\vert X^0(0:k-1)]$ with inequalities (\ref{eq:NecessityIneq}) and (\ref{ineq:proofStep2}), we obtain 
\begin{align}
\frac{I_*[S^l(0:k -1); Q(0:k - 1) \vert X^0(0:k -1)]}{k} &> \sum_{\ell=0}^{d_l} \log \vert \lambda^l_{\ell} \vert,
\label{eq:lastIneqNecessity}
\end{align} 
for $k \rightarrow \infty $. From (\ref{eq:NecessityIneq}) and (\ref{eq:lastIneqNecessity}), we conclude that
\begin{align}
\frac{I_*[X^0(0:k -1); Q(0:k - 1)]}{k} > \sum_{\ell=0}^{d_0} \log \vert \lambda^0_{\ell} \vert, \nonumber \\
\frac{I_*[S^l(0:k -1); Q(0:k - 1) \vert X^0(0:k -1)]}{k} > \sum_{\ell=0}^{d_l} \log \vert \lambda^l_{\ell} \vert,\nonumber
\end{align}
for $l \in \left\lbrace  1,2 \right\rbrace$. This completes the proof of necessity. 
%-----------------------------------------------------------------
\subsubsection{Sufficiency Proof}
%-----------------------------------------------------------------
% Encoder Structure 
%-----------------------------------------------------------------

\begin{figure}[t]
\centering
\vspace*{.33cm}
\scalebox{0.75}{
	\begin{tikzpicture}[auto, node distance=2cm,>=latex'] [scale=2]
	%%%%%%%%%%%%%%%%%%%%%%%%%%%%%%%%%%%%%%%%%%%%%%%%%%%%%%%%%%
	% Blocks
	%%%%%%%%%%%%%%%%%%%%%%%%%%%%%%%%%%%%%%%%%%%%%%%%%%%%%%%%%%
	\node[block, name=O1] [minimum width=8mm, minimum height=8mm] (O1) {$\mathcal{O}^l$};      
	\node[block, below = .5cm of O1] [minimum width=8mm, minimum height=8mm] (O0) {$\mathcal{O}^0$};

	\node[block, right = 1cm of O1] [minimum width=8mm, minimum height=8mm] (down1) {$\downarrow n$};
	\node[block, right = 1cm of O0] [minimum width=8mm, minimum height=8mm] (down0) {$\downarrow n$};
	
	\node[block, right = 1cm of down1] [minimum width=8mm, minimum height=8mm] (Q1) {$\mathcal{Q}^l$};
	\node[block, right = 1cm of down0] [minimum width=8mm, minimum height=8mm] (Q0) {$\mathcal{Q}^0$};
	
	\node (midway) at ($(Q1)!.5!(Q0)$) {};
	
	\node[block, right = 1cm of midway] [minimum width=8mm, minimum height=20mm] (Enc) {$\mathcal{E}^l$};
	%%%%%%%%%%%%%%%%%%%%%%%%%%%%%%%%%%%%%%%%%%%%%%%%%%%%%%%%%%
	% Invisible Nodes
	%%%%%%%%%%%%%%%%%%%%%%%%%%%%%%%%%%%%%%%%%%%%%%%%%%%%%%%%%%
	\node[inner sep=0,minimum size=0, left = .5cm of O1] (input1) {};  
	\node[inner sep=0,minimum size=0, left = .5cm of O0] (input0) {};  
	\node[output, right = .5cm of Enc]  (output) {};
	
	\node[inner sep=0,minimum size=0, left = 0cm of input1]() {$Y^l(k)$};
	\node[inner sep=0,minimum size=0, left = 0cm of input0]() {$Y^0(k)$};
	
	\node[inner sep=0,minimum size=0, right = 0cm of output]() {$S^l(nk:n(k+1)-1)$};
	%%%%%%%%%%%%%%%%%%%%%%%%%%%%%%%%%%%%%%%%%%%%%%%%%%%%%%%%%%%
	\draw [draw,->] (input1) -- (O1);
	\draw [draw,->] (input0) -- (O0);
	
	\draw [draw,->] (O1) -- node {$\bar{X}^l(k)$} (down1);
	\draw [draw,->] (O0) -- node {$\bar{X}^0(k)$} (down0);
	
	\draw [draw,->] (down1) -- node {$\bar{X}^l(nk)$} (Q1);
	\draw [draw,->] (down0) -- node {$\bar{X}^0(nk)$} (Q0);
	
	\draw [draw,->] (Q1) -- node {$M^l$} (Enc.122);
	\draw [draw,->] (Q0) -- node {$M^0$} (Enc.-122);
	
	\draw [draw,->] (Enc) -- (output);
	
	%-----------------------------------------------------------
	% Draw dashed rectangle 
	%----------------------------------------------------------- 
	\node (midway0) at ($(O1)!.5!(O0)$) {};
	\node[inner sep=0,minimum size=0, left = .6cm of midway0](midway01){};
	
	\node[inner sep=0,minimum size=0, above = 1.5cm of midway01](a){};
	\node[inner sep=0,minimum size=0, below = 3cm of a](b){};
	\node[inner sep=0,minimum size=0, right = 6.5cm of b](c){};
	\node[inner sep=0,minimum size=0, above = 3cm of c](d){};
	
	\node (midwayText) at ($(a)!.5!(d)$) {};
	\node[inner sep=0,minimum size=0, above = 0cm of midwayText](){Encoder $ \gamma^l$};
	
	\draw [draw,dashed] (a) -- (b) -- (c) -- (d) -- (a);
	
	\end{tikzpicture}
}
\caption{Structure of encoder $\gamma^l$ for $l \in \lbrace 1, 2 \rbrace$.}
\label{fig:EncoderStructure}
\end{figure}
%-----------------------------------------------------------------
The sufficiency of (\ref{conj}) is now established using a separation structure between source and channel coding. To this end, we discuss in detail the structure of the encoder blocks $\gamma^1$ and $\gamma^2$ as depicted in Fig.~\ref{fig:EncoderStructure}. 

Firstly, a Luenberger observer $\mcO^i$, with $i \in \lbrace 0,1,2 \rbrace$ generates the signals $\bar{X}^i(k)$ at time instant $k$. The state observer $\mcO^i$ is defined by the following equation 
\begin{align}
\bar{X}^i(k+1) = A^i \bar{X}^i(k) + L^i \left( Y^i(k) - C^i\bar{X}^i(k)\right),
\label{eq:observer}
\end{align} 
where $L^i$ is a filter matrix of appropriate dimensions. %This step is guaranteed by the by assumption \textbf{(A1)}, i.e., the observability of $(A^j, C^j)$ for $j \in \lbrace 0,1,2 \rbrace$. 
The observability of $(A^i, C^i)$ for $i \in \lbrace 0,1,2 \rbrace$ -Assumption \textbf{(A1)}- implies that there exists an observer (\ref{eq:observer}) which guarantees asymptotic boundedness of estimation error. More details regarding linear state observers can be found in \cite{linSys}. Note that the process noise corresponds to the innovations fed to the state observer, i.e., 
\begin{align}
	\bar{V}^i(k) = L^i (Y^i(k) - C^i\bar{X}^i(k)).
\end{align}
The $i$-th observer's output $\bar{X}^i(k+1)$ are then down-sampled by a factor of $n$ to yield 
\begin{align}
\bar{X}^i(n(k+1)) = (A^i)^n \bar{X}^i(nk) + \Psi_{n-1}^i(k), 
\end{align}
where it can be shown that the accumulated innovation noise $\Psi_{r}^i(k) := \sum_{\xi = 0}^{r} (A^i)^{n-1-\xi} \bar{V}^i(nk + \xi)$ is uniformly bounded $\forall r \in [0:n-1]$ over $k \in \mathbb{Z}_{\geq 0}$ when an appropriate filter $L^i$ is used.

Next, each of the down-sampled sequences is processed by the respective adaptive quantizer $\mcQ^i$ to generate the messages $M^i$ that are drawn from finite sets $\mcM^i$ with cardinalities $\vert \mcM^i \vert$ for $i \in \lbrace 0,1,2 \rbrace$. Finally, each pair of messages $\left\lbrace\left(M^0,M^l\right)\right\rbrace_{l=1}^2$ is mapped into the codeword $S^l(nk:n(k+1)-1)$ with the block-length $n$ by the corresponding channel encoder $\mcE^l$ at time instant $k$. The code rate for each message $M^i$ is then
\begin{align}
R^i = (\log \vert \mcM^i \vert)/n, \ \text{for } i\in\lbrace 0,1,2\rbrace. 
\end{align}

The receiver $\delta$ is a three-stage process that consists of the reverse operations performed by the encoding block. Firstly, using the channel output $Q(nk:n(k +1)-1)$, the message estimates $\lbrace \hat{M}^i \rbrace_{i=0}^{2}$ are produced by an appropriate channel decoder. Each of these estimates is then processed by an \textit{adaptive dequantizer} $\mcG^i$ followed by an upsampling operation by $n$. This interpolation process consists in inserting $n-1$ zeros between each couple of samples in $\hat{X}^i(nk)$ to generate the state estimations $\hat{X}^i(k)$ at time instant $k$ for $i\in \lbrace 0,1,2 \rbrace$. 
%-----------------------------------------------------------------
% Decoder Structure 
%-----------------------------------------------------------------

\begin{figure}[t!]
	\centering
	\scalebox{0.85}{
		\begin{tikzpicture}[auto, node distance=2cm,>=latex'] [scale=2]
		%%%%%%%%%%%%%%%%%%%%%%%%%%%%%%%%%%%%%%%%%%%%%%%%%%%%%%%%%%
		% Encoders
		%%%%%%%%%%%%%%%%%%%%%%%%%%%%%%%%%%%%%%%%%%%%%%%%%%%%%%%%%%
		\node[block, name=G1] [minimum width=8mm, minimum height=8mm] (G1) {$\mathcal{G}^1$};      
		\node[block, below = .5cm of G1] [minimum width=8mm, minimum height=8mm] (G0) {$\mathcal{G}^0$};
		\node[block, below = .5cm of G0] [minimum width=8mm, minimum height=8mm] (G2) {$\mathcal{G}^2$};

		\node[block, left = 1cm of G0] [minimum width=8mm, minimum height=34mm] (D) {$\mathcal{D}$};
		
		\node[block, right = 1.3cm of G1] [minimum width=8mm, minimum height=8mm] (up1) {$\uparrow n$};
		\node[block, right = 1.3cm of G0] [minimum width=8mm, minimum height=8mm] (up0) {$\uparrow n$};
		\node[block, right = 1.3cm of G2] [minimum width=8mm, minimum height=8mm] (up2) {$\uparrow n$};
		
		%%%%%%%%%%%%%%%%%%%%%%%%%%%%%%%%%%%%%%%%%%%%%%%%%%%%%%%%%%
		% Invisible Nodes
		%%%%%%%%%%%%%%%%%%%%%%%%%%%%%%%%%%%%%%%%%%%%%%%%%%%%%%%%%%
		\node[inner sep=0,minimum size=0, left = .5cm of D] (input) {};   
		
		\node[output, right = .5cm of up1]  (output1) {};
		\node[output, right = .5cm of up0]  (output0) {};
		\node[output, right = .5cm of up2]  (output2) {};
		
		\node[inner sep=0,minimum size=0, left = 0cm of input]() {$Q(nk:n(k+1)-1)$};
		
		\node[inner sep=0,minimum size=0, right = 0cm of output1]() {$\hat{X}^1(k)$};
		\node[inner sep=0,minimum size=0, right = 0cm of output0]() {$\hat{X}^0(k)$};
		\node[inner sep=0,minimum size=0, right = 0cm of output2]() {$\hat{X}^2(k)$};
		%%%%%%%%%%%%%%%%%%%%%%%%%%%%%%%%%%%%%%%%%%%%%%%%%%%%%%%%%%%
		\draw [draw,->] (input) -- (D);
		
		\draw [draw,->] (D.73) -- node {$\hat{M}^1$} (G1);
		\draw [draw,->] (D) -- node {$\hat{M}^0$} (G0);
		\draw [draw,->] (D.-73) -- node {$\hat{M}^2$} (G2);
		
		\draw [draw,->] (G1) -- node {$\hat{X}^1(nk)$} (up1);
		\draw [draw,->] (G0) -- node {$\hat{X}^0(nk)$} (up0);
		\draw [draw,->] (G2) -- node {$\hat{X}^2(nk)$} (up2);
		
		\draw [draw,->] (up1) -- (output1);
		\draw [draw,->] (up0) -- (output0);
		\draw [draw,->] (up2) -- (output2);
		
		%-----------------------------------------------------------
		% Draw dashed rectangle 
		%----------------------------------------------------------- 
		\node[inner sep=0,minimum size=0, right = .2cm of input](midway0){};
		
		\node[inner sep=0,minimum size=0, above = 2cm of midway0](a){};
		\node[inner sep=0,minimum size=0, below = 4cm of a](b){};
		\node[inner sep=0,minimum size=0, right = 5.2cm of b](c){};
		\node[inner sep=0,minimum size=0, above = 4cm of c](d){};
		
		\node (midwayText) at ($(a)!.5!(d)$) {};
		\node[inner sep=0,minimum size=0, above = 0cm of midwayText](){Decoder $\delta$};
		
		\draw [draw,dashed] (a) -- (b) -- (c) -- (d) -- (a);
		
		\end{tikzpicture}
	}
	\caption{Structure of decoder $\delta$.}
	\label{fig:DecoderStructure}
\end{figure}

As $\text{int}(\mcC_0)$ is, by definition, an open and non-empty set, there exists an open $l_\infty$-ball centered at $h$ and with arbitrarily small radius $\delta >0$ denoted as $\textbf{B}_\delta (h) \subseteq \text{int}(\mcC_0)$. Hence, for the given vector of topological entropies $ h = (h_0,h_1,h_2)^T \in \text{int}(\mcC_0)$, there exists a rate vector $R = (R^0,R^1,R^2)^T \in \textbf{B}_\delta (h)$, i.e., for $i \in  \lbrace 0,1,2 \rbrace$, $R^i = h_i + \delta$.
%\begin{align}
%	R^i = h_i + \delta.
%\end{align}
Let $\textbf{B}_\epsilon (R)$ be the $l_\infty$-ball of center $R$ and arbitrarily small radius $\epsilon >0$ as depicted in Fig.~\ref{fig:sufficiencyProof}. Since $R \in \mcC_0 $, then there exists a rate vector $R_n \in \mcC_{0,n} \cap \textbf{B}_\epsilon (R)$ with $n \in \mathbb{Z}_{\geq 1} $, i.e., 
\begin{align}
	\Vert R - R_n \Vert_{\infty} \leq \epsilon.
\end{align}
As $\epsilon$ can be arbitrarily chosen, we can select $\epsilon$ small enough such that $\epsilon < \delta$; and hence, it is guaranteed that there exists a zero-error code with rate vector $R_n$ that is component-wise strictly larger than $h$.

Consequently, the communication channel linking each message $M^i$ to its estimate $\hat{M}^i$ can be modeled as noiseless. Hence, by the data rate theorem (Proposition 5.2 in \cite{tatikonda2004}) $\forall R_n^i > h_i$, $\exists \mcQ^i, \mcG^i$ such that the prediction  error $E^i(kn) = X^i(kn) - \hat{X}^i(kn)$ is uniformly bounded. 
\begin{figure}
\centering
	\scalebox{0.75}{
	\begin{tikzpicture}[scale=1]
	\begin{axis}[%
		xlabel={$R^1$},
		ylabel={$R^2$},
		xmin = 0, 
		xmax = 1,
		ymax = 1, 
		ymin = 0,
		xticklabels={,,},
		yticklabels={,,}
		]
		\addplot[name path=f,domain=0:1,blue] {-x+1};
		
		\path[name path=axis] (axis cs:0,0) -- (axis cs:1,0);
		
		\addplot [
		thick,
		color=blue,
		fill=blue, 
		fill opacity=0.05
		]
		fill between[
		of=f and axis,
		split,
		every segment no 0/.style={
			blue,
		},
		];
 		\node at (axis cs:  .2,  .2) {$\mcC_0$};
%		\node [rotate=90] at (axis cs:  1.05,  .25) {$x=1$};
	\end{axis}
	\node at (3,3) [below right] {$R$};
	\fill[black] (3,3) circle (1.5pt);
	\node at (2,2) [below] {$h$};
	\fill[black] (2,2) circle (1.5pt);
	
	\fill[black] (2.5,2.7) circle (1.5pt);
	
	\draw[draw,dashed,red,thick] (3,3) +(-.5,-.5) rectangle +(.5,.5) ;
	
	\draw [draw,dashed] (2,2) -- (3,2) -- (3,3) -- (2,3) -- (2,2);
	\draw[latex-latex] (2,1.9) -- (3,1.9);
	\node at (2.5,1.9) [below] {\small $\delta$};
	
	\draw[latex-latex, red] (2.5,3.6) -- (3.5,3.6);
	\node at (3,3.6) [above] {\textcolor{red}{\small $2\epsilon$}};
	
	\draw[-latex] (2,3.2) -- (2.5,2.7);
	\draw[-] (2,3.2) -- (1.8,3.2);
	
	\node at (1.8,3.2) [left] {$R_n \in \mcC_{0,n}$};
	
	\draw[-latex,red] (4,3.2) -- (3.5,3.2);
	\node at (4,3.35) [right] {\textcolor{red}{\small $\Vert \cdot \Vert_{\infty}$-ball of radius}};
	\node at (4,3.05) [right] {\textcolor{red}{\small $\epsilon$ and center $R$}};
	
	\draw[-latex,blue] (2,4.5) -- (1.5,4.5);
	\node at (2,4.5) [right] {{\small Boundary of the $\mcC_0$ region}};
	\end{tikzpicture}}
\caption{Illustrative figure of the zero-error capacity region's boundary in a two-dimensional space, i.e., $R^0 = 0$. The region shadowed in blue corresponds to a part of the channel's [zoomed-in] zero-error capacity region $\mcC_0$.}
\label{fig:sufficiencyProof}
\end{figure}
Now, every time instant $t \in \mathbb{Z}_{\geq 1}$ can be written for some nonnegative integer $k$ as $t = kn + r$, where $r \in  [0 : n - 1]$. Consider also the estimator
\begin{align}
\hat{X}^i(t) := (A^i)^r \hat{X}^i(nk), \ \text{for } i \in \lbrace 0,1,2\rbrace. 
\end{align}
We examine the requirement (\ref{eq:uniformBoundedness}) for the resulting estimation error $E^i(t) = X^i(t)- \hat{X}^i(t)$, i.e.,
\begin{align}
\sup & \Vert E^i(t)  \Vert = \sup \Vert (A^i)^r X^i(nk) + \Psi_{r}^i(k) - (A^i)^r \hat{X}^i(nk) \Vert \nonumber \\ 
& \leq \Vert (A^i)^r \Vert  \sup \Vert X^i(nk)- \hat{X}^i(nk) \Vert + \Vert \Psi_{r}^i(k) \Vert \nonumber \\
& \leq \max_{r\in[0:n-1]} \left\lbrace \Vert (A^i)^r \Vert \right\rbrace \sup \Vert E^i(nk) \Vert + \Vert   \Psi_{r}^i(k) \Vert, 
\label{proof:suffieciency4}
\end{align}
where the prediction error $E^i(nk)$ of the down-sampled system was shown to be uniformly bounded for some coder-estimator tuple, and the accumulated process noise term $\Psi_{r}^i(k)$ does also satisfy this condition. The RHS of (\ref{proof:suffieciency4}) is therefore uniformly bounded over $k \in \mathbb{Z}_{\geq 0}$. This completes the sufficiency proof.

\subsection{Generality of the Proposed Setup for Noiseless Linear Systems}
\label{subsec:Generality}
\begin{figure}
	\centering
	\scalebox{0.7}{
	\begin{tikzpicture}[auto, node distance=2cm] 
	%%%%%%%%%%%%%%%%%%%%%%%%%%%%%%%%%%%%%%%%%%%%%%
	\node[block, name= Enc1] [minimum width=20mm, minimum height=8mm] (Enc1) {Sensor $\gamma^1$};
	
	\node[block, below = .7cm of Enc1] [minimum width=20mm, minimum height=8mm] (EncM) {Sensor $\gamma^2$};
	
	\node (midpoint) at ($(Enc1)!0.5!(EncM)$) {};
	
	\node[cloud node, left = 2cm of midpoint] [minimum width=25mm, minimum height=25mm, align=center] (Sys) { };
	
	\node[inner sep=0,minimum size=0, below = .5cm of Sys, align=center] (system) {Noiseless LTI System \\ with \textbf{two} mode subsets}; 
	
	\node (circ1) at ($(Sys)-(.1,.2)$) { };
	\node (circ2) at ($(circ1)+(.4,.7)$) { };
	
	\begin{scope}[transparency group]
	\begin{scope}[blend mode=multiply]
	
	\draw[fill=blue!10] (circ1) circle (.5cm);
	\draw[fill=red!10]  (circ2) circle (.5cm);
	
	\node (circ1) at ($(Sys)-(.1,.2)$) {$\Lambda^2$};
	\node (circ2) at ($(circ1)+(.4,.7)$) {$\Lambda^1$};
	
	\end{scope}
	\end{scope}
	\node[block, right = 1.5cm of midpoint, align=center] [minimum width=10mm, minimum height=20mm, pin={[pinstyle]above:{\small  Bounded Noise}}] (MAC) {\rotatebox{90}{Multiple Access Channel}};
	
	\node[block, right = .5cm of MAC] [minimum width=8mm, minimum height=39mm] (D) {\rotatebox{90}{Decoder}};
	%%%%%%%%%%%%%%%%%%%%%%%%%%%%%%%%%%%%%%%%%%%%%%%%%%%%
	% INVISIBLE NODES
	%%%%%%%%%%%%%%%%%%%%%%%%%%%%%%%%%%%%%%%%%%%%%%%%%%%%
	\node[inner sep=0,minimum size=0, left = .7cm of Enc1] (in1) {};
	
	\node[inner sep=0,minimum size=0, left = .7cm of EncM] (inM) {};
	
	\node[inner sep=0,minimum size=0, right = .6cm of Enc1] (out1) {};
	
	\node[inner sep=0,minimum size=0, right = .6cm of EncM] (outM) {};
	% --------------------------------------------
	\node [output, right = .7cm of D.-60] (output1) {};
	\node [output, right = .7cm of D] (output2) {};
	\node [output, right = .7cm of D.60] (outputM) {};
	%%%%%%%%%%%%%%%%%%%%%%%%%%%%%%%%%%%%%%%%%%%%%%
	\draw [arrow, align=center] (in1) -- node {$y_1$} (Enc1);
	
	\draw [arrow] (inM) -- node {$y_2$} (EncM);
	% --------------------------------------------
	\draw [-] (Sys) -- node { } (in1);
	
	\draw [-] (Sys) -- node { } (inM);
	% --------------------------------------------
	\draw [arrow] (Enc1) -- node { } (out1);
	\draw [arrow] (EncM) -- node { } (outM);
	\draw [arrow] (MAC)  -- node {} (D);
	% --------------------------------------------
	\draw [arrow] (D) -- node {$\hat{x}$} (output2);
	\end{tikzpicture}}
	\caption{The states of a noiseless (process and measurement) LTI system are detected by two distinct sensors. Each sensor incorporates an observer and a channel encoder. In this scenario, we consider uncoordinated access strategies with no apriori resource allocation modeled as a MAC.}
	\label{fig:genModel}
\end{figure}

The dynamic decoupling between the three subsystems in (\ref{originalState})-(\ref{originalOutput}) may seem like a strict requirement. In this subsection, we show that any noiseless linear plant  observed via two sensors can be put into this form, under a mild assumption on the output matrices. Subsystems 1 and 2 represent the plant modes that are observable only at sensors 1 and 2 respectively, while subsystem 0 represents the modes that are observable at each of the two sensors. 

Consider a noiseless LTI system with a dynamic matrix $\tilde{A}\in\mathbb{R}^{d\times d}$ that is observed by two different sensors $\gamma^1$ and $\gamma^2$ as depicted in Fig.~\ref{fig:genModel}. The system is described by
\begin{subequations} 
	\begin{align}
	\tilde{X}(k+1) &= \tilde{A} \tilde{X}(k),  \label{eq:stateEq}\\
	Y^l(k) &= \tilde{C}^l \tilde{X}(k), \ \text{for}\ l\in \lbrace 1,2\rbrace, 	\label{eq:outEq}
	\end{align} 
\end{subequations}
where the joint system $\left(\left(\tilde{C}^1, \tilde{C}^2 \right), \tilde{A}\right)$ is observable. In order to decouple the states of the observed system, it is possible to put the matrix $\tilde{A}$ into \textit{Jordan canonical form} \cite{roger2012}. Note that we call \textit{system modes} to refer to the states associated with a Jordan block. Supposing that $\tilde{A}$ has $p$ real eigenvalues $\lbrace \lambda_k\rbrace_{k=1}^{p}$ and $(n-p)/2$ conjugate complex eigenvalues $\lbrace \lambda_k = \alpha_k + j\beta_k \  (\lambda_k^{*} = \alpha_k - j\beta_k)\rbrace_{k=p+1}^{m}$ with $m=p+(n-p)/2$, then there exists a transformation matrix $T$ such that
\begin{align}
T = \left(v_1, \cdots, v_p, \Re(v_{p+1}),\Im(v_{p+1}), \cdots, \Re(v_{m}),\Im(v_{m})\right),
\label{eq:transMatrix}
\end{align}
where $v_k$ refer to the eigenvector corresponding to the eigenvalue $\lambda_k$. Using the matrix $T$, (\ref{eq:stateEq})-(\ref{eq:outEq}) become 
\begin{subequations}
	\begin{align}
		X(k+1) 	&= \underbrace{T^{-1}\tilde{A}T}_{A} X(k),  \label{eq:stateEqTrans}\\
		Y^l(k) &= \underbrace{\tilde{C}^lT}_{C^l} X(k), \ \text{for}\ l\in \lbrace 1,2\rbrace \label{eq:outEqTrans}
	\end{align}
\end{subequations}
with the initial state $X(0) = T^{-1} \tilde{X}(0)$. The dynamic matrix $A$ has then the following form 
\begin{align}
A 	= T^{-1}\tilde{A}T = 
%\textbf{diag}(J_1, \cdots, J_\Gamma) 
\begin{pmatrix}
J_1 & & 0 \\
& \ddots &  \\
0 & & J_\Gamma
\end{pmatrix},
\label{eq:AmatrixJordan}
\end{align}
with $\Gamma$ \textit{Jordan blocks} such that the $i$-th block is
\begin{align}
J_i = \begin{pmatrix}
\lambda_i 	&1 			& 		&0 	\\
			&\ddots 	&\ddots &  		\\
			&   		&\lambda_i &1	\\
0 			& 			&   	&\lambda_i
\end{pmatrix},
\end{align}
for real eigenvalues, and  
\begin{align}
J_i = \begin{pmatrix}
W_i & I_2 & & 0 \\
&\ddots &\ddots &  \\
&  &  W_i&  I_2 \\
0 	& 		&   &W_i	
\end{pmatrix}, \  W_i = \begin{pmatrix} \alpha_i & \beta_i \\ -\beta_i & \alpha_i \end{pmatrix}, \ I_2 = \begin{pmatrix} 1 & 0 \\ 0 & 1 \end{pmatrix}  \nonumber 
\end{align}
for complex conjugate eigenvalues $\alpha_i\pm j\beta_i$. For simplicity we suppose that each of the blocks $\lbrace J_i \rbrace_{i=1}^\Gamma$ admits distinct eigenvalues. Note that the output matrix $C^l$ can be expressed as a concatenation of sub-matrices $C^l_i$  with $i\in [1:\Gamma]$
\begin{align}
C^l = \begin{pmatrix}
C^l_1	&C^l_2 &\cdots	&C^l_\Gamma		
\end{pmatrix} \in \mathbb{R}^{b_l\times d},\label{eq:Cmatrix}
\end{align}
where each sub-matrix $C^l_i$ is a collection of column vectors such that 
\begin{align}
C^l_i = \begin{pmatrix}
c^l_{1,i}	&c^l_{2,i}	&\cdots	&c^l_{d_{i},i}	
\end{pmatrix}\in \mathbb{R}^{b_l\times d_{i}}. \label{eq:CBlock}
\end{align}  
 
% --------------------------
\begin{lemma}\label{lemma:ObseMatrixAC}
	Without loss of generality, consider the state matrix $A$ to be in Jordan form (\ref{eq:AmatrixJordan}) and the output matrix $C^l$ (\ref{eq:Cmatrix}) for $l\in\lbrace 1,2\rbrace$. $\Gamma$ denotes the number of Jordan blocks such that each block admits a distinct eigenvalue. Then the pair $(A,C^l)$ is observable if and only if the leading column of the block $C^l_{i}$, namely $c^l_{1,i}$, is non-zero $\forall i\in [1:\Gamma]$. 
\end{lemma}

\begin{proof}
	See Appendix D.
\end{proof}

We now assume that the matrix $C^l$ has the following structure: If there exists a non-zero coefficient $c^l_{\ell,i}$, $\ell \in [1:d_i]$, associated with a Jordan block $J_i$, then the leading column in $C^l_i$ must be non-zero, i.e., $c^l_{1,i} \neq 0$. 
For each of the systems $\lbrace(A,C^l)\rbrace_{l=1}^2$, constrain the matrix $A$ to the Jordan blocks that have non-zero contribution to the output $y^l$. This results w.l.o.g. in a reduced system $(A^l_*,C^l_*)$ for $l\in\lbrace 1,2\rbrace$. Based on the structure of $C^l$ (and subsequently $C^l_*$) as well as Lemma \ref{lemma:ObseMatrixAC}, both systems are observable and thus the modes associated with the Jordan blocks are reconstructable. To clarify this structure for the reader, an example is provided. 
% ----------------------------------------
\begin{example}
	In this example, we consider a system matrix $A$ with two Jordan blocks as follows
	\begin{align}
	A &= 
	\begin{pmatrix}
	\lambda_1 	&1 			&0				&0	\\
	0			&\lambda_1 	&0 				&0  \\
	0		   	&0			&\lambda_2		&1	\\
	0		   	&0			&0				&\lambda_2
	\end{pmatrix}\in \mathbb{R}^{4\times 4},
	\end{align}
	and two output matrices such that $C^1$ satisfies the assumed structure and $C^2$ violates it, 
	\begin{align}
	C^1 &= 
	\begin{pmatrix}
	c_{1,1}^1 	&c_{2,1}^1 		&0			&0
	\end{pmatrix} \in \mathbb{R}^{1\times 4}, \\
	C^2 &= 
	\begin{pmatrix}
	0 	&c_{2,1}^2  	&0			&c_{2,2}^2
	\end{pmatrix} \in \mathbb{R}^{1\times 4},
	\end{align}      
	with $c_{1,1}^1, c_{2,1}^1, c_{2,1}^2$ and $c_{2,2}^2\neq 0$. Then, the reduced system equations for $l\in\lbrace 1,2\rbrace$ are respectively 
	\begin{align}
	A^1_* &= 
	\begin{pmatrix}
	\lambda_1 	&1 			\\
	0			&\lambda_1	\\
	\end{pmatrix}\in \mathbb{R}^{2\times 2},\\
	C^1_* &= 
	\begin{pmatrix}
	c_{1,1}^l 	&0 
	\end{pmatrix} \in \mathbb{R}^{1\times 2}.
	\end{align}
	and 
	\begin{align}
	A^2_* &= 
	\begin{pmatrix}
	\lambda_2 	&1 			\\
	0			&\lambda_2	\\
	\end{pmatrix}\in \mathbb{R}^{2\times 2},\\
	C^2_* &= 
	\begin{pmatrix}
	0 	&c_{2,2}^2 
	\end{pmatrix} \in \mathbb{R}^{1\times 2}.
	\end{align}
	
	According to Lemma \ref{lemma:ObseMatrixAC}, the modes $(x_1,x_2)$ fall into the observability space of sensor 1, whereas $(x_3,x_4)$ are unobservable by both sensor 1 and 2. 
	%A formal check using the pairs $(A^1_*,C^1_*)$ and $(A^2_*,C^2_*)$: 
	%\begin{align}
	%\text{rank}\left. \begin{pmatrix} A^1_*- s I \\C^1_*\end{pmatrix}\right\vert_{s=\lambda_1}=\text{rank}
	%\begin{pmatrix}
	%0 			&1 			 \\
	%0			&0 			 \\
	%c^1_{1,1} 	&0	
	%\end{pmatrix} = 2, 
	%\end{align}
	%and
	%\begin{align}
	%\text{rank}\left. \begin{pmatrix} A^2_*- s I \\C^2_*\end{pmatrix}\right\vert_{s=\lambda_2}=\text{rank}
	%\begin{pmatrix}
	%0 			&1 			 \\
	%0			&0 			 \\
	%0		 	&c^2_{2,2}	
	%\end{pmatrix} =1.  \qquad\blacksquare
	%\end{align} 
\end{example}
W.l.o.g. we express (\ref{eq:outEqTrans}) as
\begin{align}
Y^l(k) &= \left( C^l_{\textbf{O}^l} \ \textbf{0} \right) 
\begin{pmatrix}
X_{\textbf{O}^l}(k)  \\ X_{\bar{\textbf{O}}^l}(k)
\end{pmatrix}, \ \text{for}\ l\in \lbrace 1,2\rbrace \label{eq:outEqTransDec}
\end{align}
where $X_{\textbf{O}^l}$ and $X_{\bar{\textbf{O}}^l}$ are the system states observable and unobservable at the $l$-th sensor, respectively. Let $\mathbb{X}_{\textbf{O}^l}$ and $\mathbb{X}_{\bar{\textbf{O}}^l}$ denote the subsets of observable and unobservable states at sensor $l\in\lbrace 1,2\rbrace$. Then, by virtue of the discussion after Lemma \ref{lemma:ObseMatrixAC}, the system modes can be partitioned into three distinct classes: 
\begin{itemize}
	\item modes common to both encoders $\gamma^1$ and $\gamma^2$, i.e., $X^0 = \mathbb{X}_{\textbf{O}^1} \cap \mathbb{X}_{\textbf{O}^2}$,
	\item modes observed \textit{only} by $\gamma^1$, i.e., $X^1 = \mathbb{X}_{\textbf{O}^1} \cap\mathbb{X}_{\bar{\textbf{O}}^2} $,
	\item modes observed \textit{only} by $\gamma^2$, i.e., $X^2 = \mathbb{X}_{\textbf{O}^2} \cap \mathbb{X}_{\bar{\textbf{O}}^1}$.
\end{itemize}

%Since the set of modes individually unobservable by both sensors is empty, i.e., $\mathbb{X}_{\bar{\textbf{O}}^1} \cap \mathbb{X}_{\bar{\textbf{O}}^2} = \emptyset$, the union of $X^0, X^1$ and $X^2$ covers the entire space of states jointly observed. 
%Furthermore under fairly generic conditions, if one component of Jordan block is observable at sensor $l$, then all of the components of that block are observable at that sensor and vice versa.  

In order to estimate the different states, a dead-beat observer \cite{dorf2001} is incorporated at each sensor. By collecting the measurements $\left(Y^l(n(k-1)),\cdots,Y^l(nk)\right)$ for any $k\in\mathbb{Z}_{\geq 1}$ over a period of time $n$, the $l$-th sensor recovers its observable states and generates the following measurements 
\begin{align}
	\bar{Y}^l(nk) &= X_{\mathbf{O}^l}(nk), \ \text{for } l\in \lbrace 1,2\rbrace. \label{decOutput}    
\end{align}
Using the mode classes $X^0, X^1$, and $X^2$, it is hence possible to decompose the original system into three subsystems with dynamic matrices $A^{(0)},\ A^{(1)}$ and $A^{(2)}$ and reform (\ref{eq:stateEqTrans})-(\ref{eq:outEqTrans}) to obtain the setup depicted in Fig.~\ref{fig:setup} characterized by (\ref{originalState})-(\ref{originalOutput}). 
%into 
%\begin{align}
%X^i(k+1) &= A^i X^i(k) \in \mathbb{R}^{d_i}, \label{decState} \\
%Y^i(k)   &= C^i X^i(k) \in \mathbb{R}^{b_i}, \ \text{for } i\in \lbrace 0,1,2\rbrace   
%\end{align}

%with each matrix pair $(C^i, A^i)$ is observable. It follows from the invertibility of $T$ (\ref{eq:transMatrix}), that the problems of estimating the states (\ref{eq:stateEq}) and (\ref{decState}) are equivalent. Hence, we 

%Although process noise signals $V := (V^0, V^1, V^2)$ are not mutually unrelated, there exists a hypercube $\Pi$ overbounding $V$. By proving that the necessity condition is true for noise term $\Pi$, then it is also valid for $V$. Conversely, there exists a hypercube $\pi$ inserted in $V$, and hence, by establishing the sufficiency condition for $\pi$, we are proving it also for $V$. The same argument is also true for the measurement noise term $W$.      

%Hence, this formulation captures some of the essential elements of the problem of distributed state estimation, e.g., where each sensor observes a different subset of the overall system's dynamical modes, although it is limited to three systems.
% -------------------------------------------------
\section{Numerical Example: The Binary Adder Channel (BAC)}
\label{sec:numExample}
% ------------------------------------------------
%\begin{figure}[htp]
%	\centering
%	\includegraphics[scale=0.38]{./Figures/FigBoundsZEcapacityBAC.png}
%	\caption{Known bounds in literature. \textit{The blue line shows the ordinary Shannon capacity region $\mcC$ of the BAC. Best available upper bound -red line- was obtained by Austrin et al. in \cite{austrin2018}, whereas the state-of-the-art lower bound -dashed brown line- has been reported by Mattas and \"{O}sterg\aa rd in \cite{mattas2005}.}}
%	\label{fig:capBAC}
%\end{figure}

%\begin{figure*}[htp]
%	%	\captionsetup[subfigure]{labelformat=empty}
%	\centering
%	\subfigure[Known bounds in literature. \textit{The blue line shows the ordinary Shannon capacity region $\mcC$ of the BAC. Best available upper bound -red line- was obtained by Austrin et al. in \cite{austrin2018}, whereas the state-of-the-art lower bound -dashed brown line- has been reported by Mattas and \"{O}sterg\aa rd in \cite{mattas2005}.}]{
%		\includegraphics[scale=0.39]{./Figures/upperBoundsmodified.png}}
%	\quad
%	\subfigure[Zoom on the corner region of (a).]{
%		\includegraphics[scale=0.39]{./Figures/zoomedFigure.png}}
%	\caption{Best known upper and lower bounds on the BAC zero-error capacity region $\mcC_0$.}
%	\label{fig:capBAC}
%\end{figure*}
% --------------------------------------------------------
 \subsection{Zero-Error Capacity Region of the BAC}
 \label{sec:BAC}
% --------------------------------------------------------
A well-known example of a MAC in the literature is the \textit{binary adder channel} (BAC). This channel model consists of two independent users communicating with one receiver via a common discrete memoryless channel such that
\begin{align} \label{defBAC}
Y_k = X_k^1 + X_k^2 \in \left\lbrace 0, 1, 2 \right\rbrace, \ \forall k \in \mathbb{Z}_{\geq 0},
\end{align}
where $X_k^1, X_k^2 \in \lbrace 0,1 \rbrace$ are the inputs generated by users 1 and 2 respectively, and $Y_k$ is the corresponding channel output. The $i$-th user selects a particular \textit{codeword} $x^i_{1:n}$ from a finite set $\mathcal{X}^i \subseteq \left\lbrace 0, 1 \right\rbrace^n$, where $n \in \mathbb{Z}_{\geq 1}$ is the code block-length. At the receiver side, zero-error decoding is achieved when each element of the sumset 
\begin{align}
\mathcal{Y} = \left\lbrace x^1_{1:n}  + x^2_{1:n} : \forall x^1_{1:n} \in \mathcal{X}^1,\ x^2_{1:n} \in \mathcal{X}^2 \right\rbrace
\end{align}
appears exactly once, i.e., one-to-one correspondence, and hence unambiguous decoding is always possible. Note that the addition here is component-wise over $\mathbb{Z}$. No closed form of the BAC zero-error capacity region $\mathcal{C}_0$ has been found yet. However, outer bounds \cite{ahlswede71, liao72, austrin2018} and inner bounds \cite{mattas2005} are available. A zero-error code from \cite{mattas2005} is used in the numerical example of the next section. 
% --------------------------------------------------
\subsection{Application to the State Estimation Problem (Theorem \ref{th:StateEstimation})}
% --------------------------------------------------
%To appreciate the result established in Theorem \ref{th:StateEstimation}, 
We explore a setup where the states of two independent dynamical systems with given topological vector of entropies $h = (h_1,h_2)^T$ are measured, encoded and transmitted by users $\mcE^1$ and $\mcE^2$ through a BAC (\ref{defBAC}) and eventually recovered at the receiver end. 
%-----------------------------------------------------
\paragraph{Simulation Details}
%-----------------------------------------------------
We implement two scalar LTI systems described by the following equations
\begin{subequations}
	\begin{align}
	x^i(k+1) &= a^i x^i(k) + v^i(k), \label{eq:simSystemState} \\
	y^i(k) &= x^i(k),\ \ i\in\lbrace 1,2\rbrace,\label{eq:simSystemOut}  
	\end{align} 
\end{subequations}
where $a^i = 2^{h_i}$ and $v^i(k)$ denotes the process noise with range $[-1,1]$. At time instant $k=0$, the systems' initial states $\lbrace x^i(0) \rbrace_{i=1}^{2}$ and noise term $\lbrace v^i(0)\rbrace_{i=1}^{2}$ are uniformly chosen from $[-1,1]$. At each sampling time $nk$ the encoder produces an estimate $\hat{x}^i(nk)$ of the true state $x^i(nk)$. Then, the error $e^i(nk) := y^i(nk) - \hat{x}^i(nk)$ is processed by means of an adaptive uniform quantizer $\mcQ^i$ as follows. Firstly, the interval $[-1,1]$ is partitioned into $K_i$ sub-intervals of equal size whose respective midpoints are denoted by $\omega(1),\cdots,\omega(K_i)$. Next, the error $e^i(nk)$ is scaled and quantized via 
\begin{align}
e_q^i(nk) := \mcQ^i\left(e^i(nk)/\ell_k^i\right) \in \lbrace \omega(1),\cdots,\omega(K_i) \rbrace.  
\end{align}        
Both encoder and decoder agree on the initial value of the scaling factor $\ell^i_0$ and update it at each instant $nk$ as follows 
\begin{align}
\ell^i_{k+1} := \frac{(a^i)^n}{K_i}\ell^i_k + \Psi^i_{\max},
\end{align}
where the term $\Psi^i_{\max} = \sum_{\xi =0}^{n-1}(a^i)^{n-1-\xi}$ forms an upper bound on the accumulated noise $\Psi^i_{n-1}(k)$ during a time window of duration $n$. The quantized error $e_q^i(nk)$ is then encoded by $\mcE^i$ which employs a zero-error codebook with block-length $n$ and a rate $ R^i = \log_2(K_i)/n$. The decoder $\mcD$ uses the received codewords to generate the $e_q^i(nk)$ with exactly zero decoding errors. Finally, the state estimate $\hat{x}^i(n(k+1))$ is updated by 
%means of the following rule 
\begin{align}
\hat{x}^i(n(k+1)) = (a^i)^n \left(\hat{x}^i(nk) + \ell^i_k e_q^i(nk)\right).
\end{align}
Note that $\hat{x}^i(k^{'})$ for $k^{'} \in (nk,n(k+1))$ is computed via 
\begin{align}
\hat{x}^i(k^{'}) = (a^i)^{k^{'}-nk} \hat{x}^i(nk).
\end{align} 
%-----------------------------------------------------
\paragraph{Numerical Results \& Discussion}  
%-----------------------------------------------------  
The system (\ref{eq:simSystemState})-(\ref{eq:simSystemOut}) is simulated for different parameter settings as outlined in Table \ref{tab:menas}. Fig.~\ref{fig:stateRealisation} depicts a realization of the systems' unstable states, i.e., $X^1=x^1$ and $X^2=x^2$, on both linear and logarithmic scales. Zero-error communication over the BAC is accomplished by means of the following binary codebooks \cite{mattas2005} of block-length $n = 6$: 
\begin{align}
	\mcX^1 &= \lbrace 0, 3, 6, 15, 17, 27, 36, 46, 48, 57, 60, 63\rbrace, \nonumber \\
	\mcX^2 &= \lbrace 8, 9, 10, 13, 21, 22, 26, 28, 29, 30, 33, 34, 35, 37, 41, \nonumber\\
			  &\ \ \ \ \  42, 50, 53, 54, 55\rbrace, \nonumber
\end{align}
where $\mcX^1$ and $\mcX^2$ are expressed in the decimal basis. 
In this experiment, $10^5$ realizations of system (\ref{eq:simSystemState})-(\ref{eq:simSystemOut}) with different initial conditions and noise values randomly chosen from the range $[-1,1]$ are simulated. At each time step $k$, the estimation error with the maximum norm over all realizations is selected and plotted in Fig.~\ref{fig:scenario1}-\ref{fig:scenario2} for both scenarios 1 and 2. Note that the topological entropies in Scenario 1 are selected such that they lie within the inner region \cite{mattas2005} of the BAC's $\mcC_0$. This guarantees that $h$ is within $\mcC_0$. On the other hand, for Scenario 2, the topological entropies are chosen from beyond the outer bound \cite{austrin2018} on $\mcC_0$.

When the vector of topological entropies $h$ lies within the BAC's $\mcC_0$, it is possible to find code rates such that $R^i > h_i$ for $i \in \lbrace 1,2 \rbrace$. Thus, the estimation error is bounded in accordance with Theorem \ref{th:StateEstimation} as exhibited in Fig.~\ref{fig:scenario1}. Scenario 2 on the other hand illustrates the case where $h \notin \mathrm{int}(\mcC_0)$, and hence, it becomes impossible to reconstruct the state estimates at the receiver with zero error. One can see in Fig.~\ref{fig:scenario2} that the estimation error grows exponentially with time.   
\begin{table}[htp]
	\centering
	\caption{Simulation Parameters. \textit{The code block-length is set to $n=6$}.}
	\begin{tabular}{l l l }
		\hline 
		\textbf{Scenario N\textdegree} & \textbf{1} & \textbf{2} \\ \hline\hline
		$h_1$  			&0.15   &0.85   		\\
		$R^1$ 	 		&0.597  &0.597  		\\
		$h_2$  			&0.18   &0.95  		\\
		$R^2$  			&0.720  &0.720			\\
		$\llb V^i\rrb_{i=1,2}$  &$[-1,1]$ &$[-1,1]$ \\
		\hline 
	\end{tabular}
	\label{tab:menas}
\end{table}
% HERE SHOW EXAMPLE OF STATE REALISATIONS
\begin{figure}[htp]
	\centering
	\includegraphics[scale=0.38]{./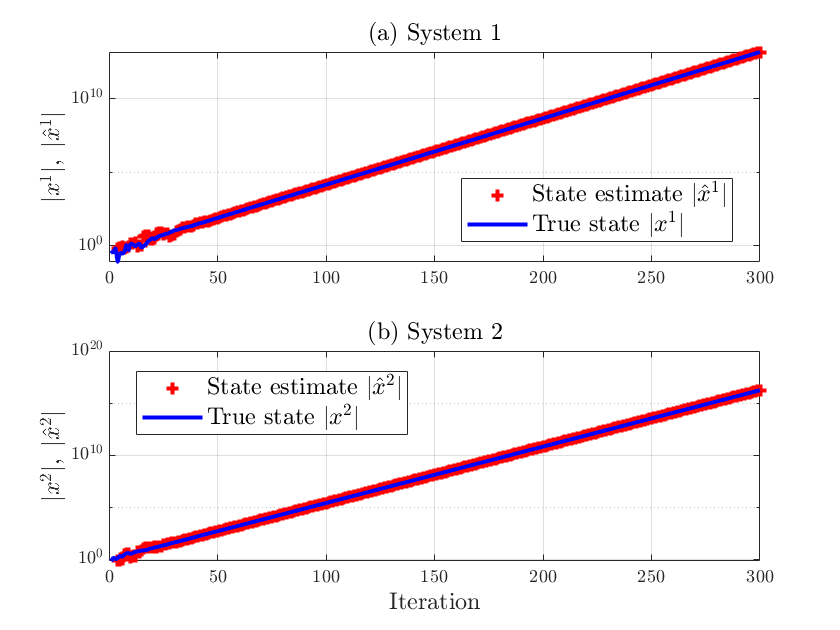}
	\caption{Example of state realizations $x^1, x^2$ and their corresponding estimations $\hat{x}^1, \hat{x}^2$ for unstable systems 1 and 2 on a \textit{logarithmic} scale.}
	\label{fig:stateRealisation}
\end{figure}

\begin{figure}[htp]
	\centering
	\includegraphics[scale=0.38]{./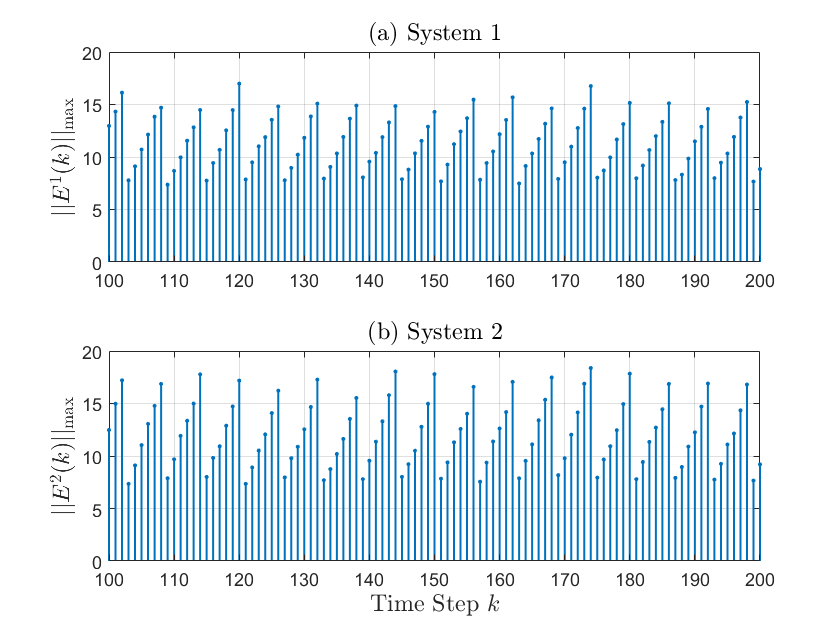}
	\caption{Empirical maximum error norms on a \textit{linear} scale for Scenario 1. The topological entropy vector $h \in \mathrm{int}(\mcC_0)$, and the code rates $R^1 > h_1$ and $R^2 > h_2$.}
	\label{fig:scenario1}
\end{figure}

\begin{figure}[htp]
	\centering
	\includegraphics[scale=0.38]{./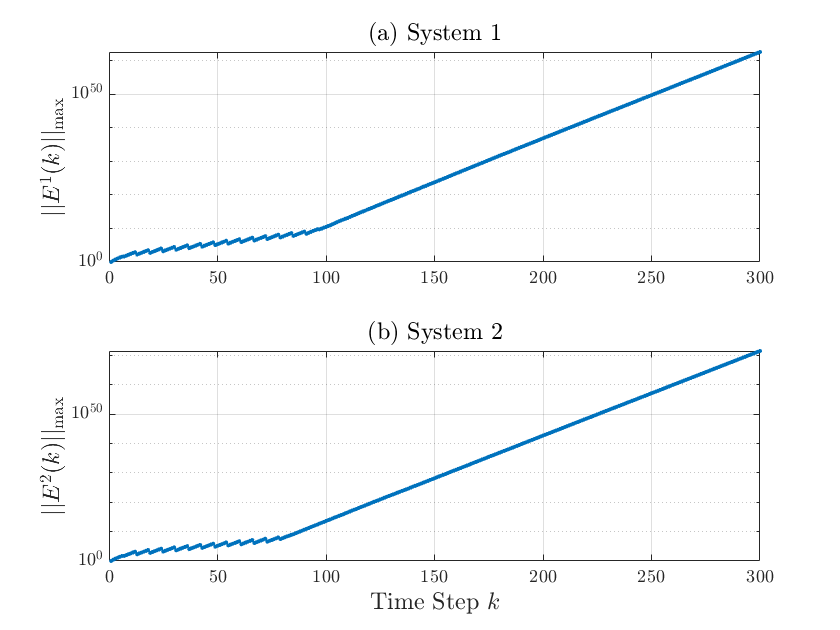}
	\caption{Empirical maximum error norms on a \textit{logarithmic} scale for Scenario 2. The topological entropy vector $h \notin \mathrm{int}(\mcC_0)$, and the zero-error code rates $R^1 < h_1$ and $R^2 < h_2$.}
	\label{fig:scenario2}
\end{figure}
% --------------------------------------------------------
\section{Conclusion}
\label{sec:conclusion}
% --------------------------------------------------------

This article represents a preliminary step in the direction of developing an in-depth understanding of the fundamental connection between the different components of a networked system in the context of distributed state estimation. Motivated by the relevance of zero-error capacity as figure of merit to assess system performance in worst-case scenarios, we derived a multi-letter characterization of the zero-error capacity region of an $M$-user MAC with common message. Unlike classical information theory, the tools from its nonstochastic analogue allowed us to obtain a result which is not only valid for asymptotically large coding block-lengths, but also true for finite ones. This MAC structure models a relevant scenario where $M$ sensors observe a different subset of the overall system's dynamical modes, and the common message represents the latent correlation between their readings, i.e., the overlap between the observed subsets.

Next, the problem of distributed state estimation across a two-input, single-output MAC with bounded noise was studied. We provided tight necessity and sufficiency conditions to ensure uniformly bounded errors at the receiver end. It has been shown that in order to achieve this goal, the vector of the plants' topological entropies must lie within the zero-error capacity region of the communication link. This result establishes a connection between the intrinsic properties of the linear systems and the channel characteristics.           

Some of the remaining research questions that we need to address in future work include the following.
%It is believed that networked control systems will play a prominent role in the next few years, and the underlying theory is still not fully established yet. In fact, there are many remaining questions that we need to answer in future work. 
\begin{itemize}
	\item The problem we considered in this article is distributed state estimation. It is also interesting if the results obtained here are appropriately extended to cover the stabilizability of LTI systems across a shared MAC. To this end, a characterization of the zero-error capacity region of MAC with feedback in the nonstochastic framework has to be derived. 
	
	\item The coding scheme using Luenberger observers, which was constructed in this article, allowed us to establish
	the achievability argument to prove the main result. Nonetheless, in order to achieve a good estimation performance in practice, more	sophisticated coding schemes, e.g., using zooming techniques \cite{brockett2000}, must be developed.
	
	\item As outlined in the course of the paper, an exact $\mcC_{0}$ characterization for many MACs, e.g., BAC, remains unknown. Hence, it is also of interest to develop low-cost algorithms to compute nonstochastic information and to ultimately provide the specific zero-error capacity region of any desired MAC, or at least to reduce the gap between the upper and lower bounds characterizing the region.      
\end{itemize}

% ----------------------------------------------------------
\appendices 
% ----------------------------------------------

% ----------------------------------------------
\section{Proof of Theorem \ref{thm:convexity}}
\label{sec:proofConvexity}
% ----------------------------------------------
Select two achievable rate triples $R^{'},\ R^{''} \in \mcC_0$ (\ref{eq:zeCap}). 
For any $ \epsilon  > 0$, $\exists (n,R^{'}_n)$ and $(m,R^{''}_m)$ such that
\begin{align}
	\Vert R^{'} - R^{'}_n   \Vert \leq \epsilon, \
	\Vert R^{''} - R^{''}_m	\Vert \leq \epsilon ,\label{rates}
\end{align} 
where $R^{'}_n \in \mcC_{0,n}$ and $R^{''}_m \in \mcC_{0,m}$ for sufficiently large $n, m \in \mathbb{Z}_{\geq 1}$ that denote the block-lengths of the zero-error codes operating at $R^{'}_n$ and $R^{''}_m$. First, we show that for any $\alpha \in (0,1)$, we can construct an achievable rate triple 
%use the selected codes $C^{'}$ and $C^{''}$ to construct a new one $C$ with a rate $R$ such that 
\begin{align}
R = \alpha R^{'} + (1 - \alpha) R^{''}. 
\label{eq:kharyet23062022@10:35am}
\end{align} 
%	From (\ref{firstRate}) and (\ref{2ndRate}) it can be assumed w.l.o.g that $R^{'}_n = R^{'} - \epsilon $ and $R^{''}_m = R^{''} - \epsilon$ . Hence, 
We define the terms $e^{'}_n$ and $e^{''}_m$ as follows
\begin{align}
	e^{'}_n  :=  R^{'} - R^{'}_n ,\ e^{''}_m :=  R^{''} - R^{''}_m. \label{ineq:epsilonCon}
\end{align}
By transmitting $j$ blocks of length $n$ at rate $R^{'}_n$, followed by the remaining $k$ blocks of length $m$ at rate $R^{''}_m$, we obtain 
\begin{align}
\bar{R}_{n,m} 
%&= \frac{jn}{jn + km} R^{'}_n + \frac{km}{jn + km} 	R^{''}_m \nonumber  \\
&= \frac{jn}{jn + km} \left( R^{'} - e^{'}_n  \right) + \frac{km}{jn + km} \left( R^{''} - e^{''}_m \right) \label{rateSplitted}.
\end{align}   
For sufficiently small $\delta>0$ and given integers $n,m \geq 1$, we seek integers $j,k$ such that 
\begin{align}
\left\vert \alpha - \frac{jn}{jn + km} \right\vert \leq \delta.
\label{ineq:kharyafaddetni26062022@10.50am}
\end{align}
Thus, for a fixed $\alpha \in (0,1)$, the ratio $(k/j) \in \mathbb{Q}$ must satisfy 
\begin{align}
\frac{n}{m} \left[ \frac{1}{\alpha + \delta}  - 1\right]  \leq \frac{k}{j} \leq \frac{n}{m} \left[\frac{1}{\alpha -\delta}- 1\right] .
\label{ineq:kharyafaddetni12062022@6.05pm}
\end{align}
%	\begin{align}
%	\leq \frac{k}{j} \leq \underbrace{\left[ \left( \frac{1 + \epsilon / R^{'}}{\alpha} \right) - 1\right] \frac{n}{m}}_{\Pi_\epsilon (\alpha )}.
%	\end{align}
For arbitrarily small $\delta$ and $\alpha \in (0,1)$, there exist $k,j \in \mathbb{Z}_{\geq 1}$ such that (\ref{ineq:kharyafaddetni12062022@6.05pm}) holds. Then, we have
\begin{align}
&\Vert \bar{R}_{n,m} - R \Vert = \nonumber \\
&= \left\Vert \frac{jn}{jn + km} \left( R^{'} - e^{'}_n  \right) + \frac{km}{jn + km} \left( R^{''} - e^{''}_m \right) - \alpha R^{'} - \right. \nonumber \\ 
&\ \ \ \left. (1-\alpha) R^{''} \right\Vert \nonumber  \\
%&= \left\Vert \left(\frac{jn}{jn + km} -\alpha \right) R^{'}+ \left(\frac{km}{jn + km} - (1-\alpha)\right) R^{''} - \right. \nonumber \\
%&\left. \ \ \ \frac{jn}{jn + km} e^{'}_n -  \frac{km}{jn + km} e^{''}_m \right\Vert \nonumber  \\
&\leq \left\vert \frac{jn}{jn + km} -\alpha \right\vert \Vert R^{'}\Vert + \left\vert - \frac{jn}{jn + km} + \alpha \right\vert \Vert R^{''} \Vert + \nonumber \\
&\ \ \ \frac{jn}{jn + km} \Vert e^{'}_n \Vert  +  \frac{km}{jn + km} \Vert e^{''}_m \Vert  \label{ineqConvex:step1}  \\ 
&\leq \delta (\Vert R^{'}\Vert + \Vert R^{''}\Vert ) + \left(\frac{jn}{jn + km} + \frac{km}{jn + km} \right)  \epsilon \label{ineqConvex:step2}  \\
&= \delta (\Vert R^{'}\Vert + \Vert R^{''}\Vert) +   \epsilon, \nonumber
\end{align}
where (\ref{ineqConvex:step1}) follows from the triangle inequality, and (\ref{ineqConvex:step2}) holds by virtue of (\ref{ineq:kharyafaddetni26062022@10.50am}) and (\ref{rates}).    
%	, and the rate $\bar{R}_{n,m}$ (\ref{rateSplitted}) becomes
%	\begin{align}
%	\bar{R}_{n,m} &\approx (\alpha \pm \delta) \left( R^{'} - e^{'}_n  \right) + (1 - \alpha \pm \delta) \left( R^{''} - e^{''}_m  \right) \nonumber  \\
%				  &= \alpha  R^{'} + (1 - \alpha) R^{''} - (\alpha \pm \delta) e^{'}_n - (1 - \alpha \pm \delta) e^{''}_m.  
%	\end{align}   
Hence, by making the block-lengths $n$ and $m$ sufficiently large, therefore, for arbitrarily small $\epsilon, \delta$, the difference $\Vert \bar{R}_{n,m} - R \Vert$ can be made arbitrarily small, i.e., the rate $R$ is arbitrarily close to the zero-error code rate $\bar{R}_{n,m}$. We then deduce that $R \in \mcC_{0}$, which proves the desired property.

%---------------------------------------------------
\section{Proof of (\ref{eq:NecessityIneq})}
\label{sec:proofIneq}
% -------------------------------------------------
The proof below is similar to the technique used in \cite{nair2013} and \cite{saberi2020} in the context of point-to-point channels but as the setup here is different, we include it for means of completeness. As mentioned in Section \ref{subsec:necessityProof}, in the following proof we only consider the unstable system eigenvalues without loss of generality. Furthermore, it follows from Def.~\ref{def:unifBoundError} that uniformly bounded errors are obtained for any uniformly bounded noise ranges $\llb V(k)\rrb$ and $\llb W(k)\rrb$ as well as any initial state range $\llb X(0)\rrb$ contained in some $l$-ball $\mathbf{B}_l$. Hence, for the upcoming analysis we set the noise terms to zero, i.e., $\llb V(k)\rrb = \llb W(k)\rrb = \lbrace 0\rbrace$ and construct the range of initial states in the following manner. First, select $\epsilon \in \left( 0, 1 - \max_{\ell: \vert \lambda^i_\ell \vert > 1} \frac{1}{\vert \lambda^i_\ell \vert} \right) $. Note that this interval is nonempty by virtue of Assumption \textbf{(A5)}. We then divide the interval $\left[ -l,l  \right] $ on the $\ell$-th axis into $\kappa_\ell$ equal subintervals of length $2l/\kappa_\ell$ such that 
\begin{align}
\kappa_\ell := \left\lfloor \left\vert \left(1 - \epsilon \right)\lambda_\ell \right\vert^k \right\rfloor, \ \ \ell \in [1:d_i] . 
\end{align} 
Let $p_\ell(s)$ denote the midpoints of the subintervals, for $s = [1:\kappa_\ell]$ and construct an interval \textbf{I}$_\ell(s)$ centered at $p_\ell(s)$ such that its length is equal to $l/\kappa_\ell$. We define the family of hypercuboids $\mathcal{H}$ as follows 
\begin{align}
\mathcal{H} = \left\lbrace  \left( \prod_{\ell = 1}^{d_i} \textbf{I}_\ell (s) \right): s \in [ 1 : \kappa_\ell ],\  \ell \in [1:d_i] \right\rbrace.
\label{eq=H} 
\end{align}   
Observe that any two hypercuboids from $\mathcal{H}$ are separated by a distance of $l/\kappa_\ell$ along the $\ell$-th axis for each $\ell \in [1:d_i]$.

In the following analysis the superscript $i$ referring to the respective plant is omitted unless otherwise stated. The initial state range is set as $\llb X^i(0) \rrb = \bigcup_{\textbf{H} \in \mathcal{H}} \textbf{H} \subset \textbf{B}_l (0) $. The operator $\text{dm}(\cdot )$ is defined as the diameter of a set using $l_\infty$-norm. Then, as $\llb E^i_j(k) \rrb \supseteq \llb E^i_j(k) \vert q(0:k-1) \rrb $, 
\begin{align}
\text{dm}(\llb E^i_j(k) \rrb) &\geq \text{dm}\left( \llb E^i_j(k) \vert q(0:k-1) \rrb \right)  \label{eq=a} \\
&= \text{dm}\left( \llb X^i_j(k) - \delta^i_j \left( k, q \left( 0:k-1 \right) \right)  \vert q(0:k-1) \rrb \right) \nonumber \\
&= \text{dm}\left( \llb (A_j^i)^k X^i_j(0) \vert q(0:k-1) \rrb \right)   \label{eq=b} \\
%&\geq \text{dm}(\llb (A_j^i)^k X^i_j(0) \vert q(0:k-1) \rrb)      \label{eq=c} \\
&= \sup_{r,p \in \llb X^i_j(0) \vert q(0:k-1) \rrb} \Vert (A_j^i)^k (r - p) \Vert  \nonumber \\
&\geq \sup_{r,p \in \llb X^i_j(0) \vert q(0:k-1) \rrb} \frac{\Vert (A_j^i)^k (r - p) \Vert_2 }{\sqrt{d_i}} \nonumber \\
&\geq \sup_{r,p \in \llb X^i_j(0) \vert q(0:k-1) \rrb} \frac{\sigma_{\min} \left( \left(A_j^i\right)^k\right) \Vert r - p \Vert_2 }{\sqrt{d_i}} \nonumber \\
&\geq  \sigma_{\min} \left( \left(A_j^i\right)^k\right) \frac{ \text{dm}\left(\llb X^i_j(0)  \vert q(0:k-1) \rrb\right) }{\sqrt{d_i}}, \nonumber	                        	                              
\end{align}
where $k \in \mathbb{Z}_{\geq 0},\ q(0:k-1) \in \llb Q(0:k-1 ) \rrb$, $\Vert \cdot \Vert_2 $ denotes the Euclidean norm and $\sigma_{\min} ( \cdot )$ refers to the smallest singular value. The inequality (\ref{eq=a}) holds because conditioning reduces the range, and (\ref{eq=b}) is valid since the diameter of a uv range is translation invariant. 
Now, note that the Yamamoto identity \cite{yamamoto67} states that 
\begin{align}
\lim_{k \rightarrow \infty} \left(  \sigma_{\min} ( (A_j^i)^k) \right)^{1/k} = \left\vert \lambda_{\min}\left(A_j^i\right) \right\vert,
\end{align}   
with $\lambda_{\min}$ being the eigenvalue with the smallest magnitude. Then, since $j \in [1: m]  < \infty $, i.e., there are finitely many real Jordan blocks $A_j^i$, there exists $k_\epsilon \in \mathbb{Z}_{\geq 0}$ such that 
\begin{align}
\sigma_{\min} \left( (A_j^i)^k  \right) \geq \left( 1 -\frac{\epsilon}{2}  \right)^k \vert \lambda_{\min} ( A_j^i ) \vert ^k, \text{ for } k \geq k_\epsilon. 
\end{align}
Additionally, by \textit{uniform boundedness of errors} (\ref{eq:uniformBoundedness}) there exists $\xi >0$ such that
\begin{align}
\xi &\geq  \sup \llb \Vert E_j^i(k) \Vert \rrb \geq \frac{1}{2} \text{dm} \left( \llb E^i_j(k) \rrb \right) \nonumber \\
&\geq \left\vert \left( 1 -\frac{\epsilon}{2}  \right)  \lambda_{\min} ( A_j^i )  \right\vert  ^k \frac{ \text{dm} \left( \llb X_j^i(0) \vert q(0:k-1) \rrb \right)}{2 \sqrt{d_i}}.
\end{align} 
For large enough $k \in \mathbb{N}$, the hypercuboid family $\mathcal{H}$ in (\ref{eq=H}) is an $\llb X^i(0) \vert Q(0:k-1 )\rrb$-overlap isolated partition of $\llb X^i(0) \rrb$. To show this, we suppose in contradiction that $\exists \textbf{H} \in \mathcal{H} $ that is overlap connected in the family $\llb X^i(0) \vert Q(0:k-1 ) \rrb$ with another hypercuboid from $\mathcal{H}$. Therefore, there would exist a set $ \llb X^i(0) \vert q(0:k-1) \rrb$ which contains a point $r_j \in \textbf{H}$ and a point $p_j \in \textbf{H}^{'}  $, with $\textbf{H}^{'} \in \mathcal{H} \setminus \textbf{H} $ such that $r_j$ and $p_j$ are overlap connected. This implies 
\begin{align}
\Vert p_j - r_j \Vert &\leq \text{dm}\left(\llb X^i(0) \vert q(0:k-1 \rrb\right) \nonumber \\ 
&\leq \frac{2\sqrt{d_i} \xi}{\left\vert \left( 1 -\frac{\epsilon}{2}  \right)  \lambda_{\min} ( A^i_j )  \right\vert  ^k},
\label{ineq:contradiction1}                  
\end{align}
for $j \in [1:m]$ and $k \geq k_\epsilon$. Nonetheless, note that by construction the distance between any two hypercuboids in $\mathcal{H}$ is equal to $l/\kappa_\ell$ along the $\ell_i$-th axis. Therefore,
\begin{align}
\Vert p_j - r_j \Vert \geq \frac{l}{\kappa_\ell} 
%&=    \frac{l}{\left\lfloor \left(1 - \epsilon \right) \vert \lambda^i_\ell \vert \right\rfloor^k }  \nonumber \\
&\leq    \frac{l}{(1 - \epsilon )^k \vert \lambda_{\min}(A_j^i) \vert^k } 
\label{ineq:contradiction2}
\end{align} 
%\vspace{-.5cm}
For sufficiently large $k$, it is possible to obtain $\left( \left( 1 - \epsilon/2 \right) /  \left( 1 - \epsilon \right) \right)^k > 2 \sqrt{d_i} \xi /l $. Hence, the RHS of (\ref{ineq:contradiction2}) would exceed the RHS of (\ref{ineq:contradiction1}) resulting in a contradiction. Thus, when $k$ is large enough, the family $\mathcal{H}$ is $\llb X^i(0) \vert Q(0:k-1) \rrb$-overlap isolated partition of $\llb X^i(0) \rrb$. As the cardinality of any $\llb X^i(0) \vert Q(0:k-1) \rrb$-overlap isolated partition is upper bounded by the maximin information $\left\vert \llb X^i(0) \vert Q(0:k-1) \rrb_* \right\vert$, we obtain 
\begin{align}
I_*[X^i(0);&Q(0:k-1)] = \log \vert \llb X^i(0) \vert Q(0:k-1) \rrb_* \vert \nonumber \\
&\geq \log \vert \mathcal{H} \vert 	               \nonumber
%&= \log \left( \prod_{\ell=1}^{d_i} \kappa_\ell  \right)       \nonumber \\
= \log \left( \prod_{\ell=1}^{d_i} \left\lfloor \vert (1 - \epsilon )\lambda^i_\ell \vert^k  \right\rfloor \right)  \nonumber \\
&\geq \log \left( \prod_{\ell=1}^{d_i} 0.5 \left\vert (1 - \epsilon )\lambda^i_\ell \right\vert^k  \right)  \nonumber \\
%&= \log \left( 2^{-d_i} (1 - \epsilon )^{d_i k} \left\vert \prod_{\ell=1}^{d_i} \lambda^i_\ell \right\vert^k  \right) \nonumber \\
&= k \left( d_i \log (1 - \epsilon ) - \frac{d_i}{k} + \sum_{\ell=0}^{d_i} \log \left\vert \lambda^i_\ell \right\vert \right).
\label{ineq:upperbound} 
\end{align}
Hence, for $i \in \left\lbrace 0,1,2 \right\rbrace$ it follows from (\ref{ineq:upperbound})
\begin{align}
\frac{I_*[X^i(0); Q(0:k - 1)]}{k} 
&\geq d_i \log (1 - \epsilon ) - \frac{d_i}{k} + \sum_{\ell=0}^{d_i} \log \vert \lambda^i_\ell \vert.
\nonumber 
\end{align}
Using the monotonicity of $I_*$ and then letting $k\to\infty$ and $\eps\to 0$, we obtain
\begin{align}
\frac{I_*[X^i(0:k -1); Q(0:k - 1)]}{k} &\geq  \frac{I_*[X^i(0); Q(0:k - 1)]}{k} \nonumber \\
&\geq \sum_{\ell=0}^{d_i} \log \vert \lambda^i_{\ell} \vert. \hspace{12mm} \Box  \nonumber
\end{align}
\vspace*{-5mm}
% -------------------------------------------------
\section{Proof of Lemma \ref{lemma:conditionalIstar}}
\label{sec:appLemma}
% -------------------------------------------------
To prove Lemma \ref{lemma:conditionalIstar} we use the common variable interpretation of (conditional) nonstochastic information as introduced in (\ref{defIstar}) and (\ref{defcondinfo}), respectively. Consider three uv’s $\Lambda , \Theta$ and $\Omega$ such that $\Lambda$ and $\Theta$ are unrelated. Let $\mcZ_{\Theta}$ denote the set of all uv’s $Z_{\Theta}$ such that $Z_{\Theta} \perp \Theta$ and $Z_{\Theta} \equiv f(\Lambda , \Theta) = g(\Omega , \Theta)$. Thus, 
\begin{align}
I_*[\Lambda ; \Omega \vert \Theta] = \max_{Z_{\Theta} \in \mcZ_\Theta} \log \vert \llb Z_\Theta \rrb \vert.
\end{align}
Additionally, the set $\mcZ$ consists of all uv's $Z$ such that $Z \equiv \phi (\Lambda) = \psi (\Omega) $, and hence, 
\begin{align}
I_*[\Lambda ; \Omega] = \max_{Z \in \mcZ} \log \vert \llb Z \rrb \vert. 
\end{align}
Since $\Lambda \perp \Theta$, then $\phi (\Lambda) \perp \Theta$ and subsequently $Z \perp \Theta$. Therefore, $\mcZ \subseteq \mcZ_\Theta $, and thus 
\begin{align}
\log \vert \llb Z \rrb \vert \leq \log \vert \llb Z_\Theta \rrb \vert. 
\label{ineq:cv1}
\end{align} 
By maximizing both LHS and RHS of (\ref{ineq:cv1}) over $\mcZ$ and $\mcZ_\Theta$, we obtain 
\begin{align}
I_*[\Lambda ; \Omega] \leq I_*[\Lambda ; \Omega \vert \Theta].\hspace{15mm} \Box \nonumber
\end{align}
% -------------------------------------------------
\section{Proof of Lemma \ref{lemma:ObseMatrixAC}}
\label{sec:LemmaObs}
% -------------------------------------------------
The following proof is based on Hautus test \cite[p.~156]{linSysChen} that states the following. 
\begin{theorem}
	A system $(A,C)$ is observable if and only if the matrix $\begin{pmatrix} A-\lambda I \\ C \end{pmatrix}$ has full column rank at every eigenvalue $\lambda$ of $A$. 
\end{theorem}

Let $A\in\mathbb{R}^{d\times d}$ be the state matrix in the sense of (\ref{eq:AmatrixJordan}) and $C^l\in\mathbb{R}^{b_l\times d}$ be the output matrix as defined in (\ref{eq:Cmatrix}). For $s\in\mathbb{R}$, consider the matrix $\begin{pmatrix} A- s I \\C^l\end{pmatrix}$ with the following form 
%\resizebox{.5\textwidth}{!}{%
\begin{align}
\resizebox{1\hsize}{!}{%
	$\left(\begin{array}{c c c c |c c c c | c | c c c c} 
	\lambda_1-s&1 & & & & & & & & & &  \\
	&\ddots&\ddots & & & & & & & & &  \\
	&			&\lambda_1-s&1 & & & & & & & \\	
	&			&			&\lambda_1-s & & & & & & & \\
	\hline
	&			& & &\lambda_2-s & 1 & & &  \\
	&			& &	& &\ddots & \ddots & &  \\
	&			& & & & &\lambda_2-s & 1 & & &  \\
	&			& & & & & &\lambda_2-s & & & \\
	\hline 
	&			& & & & & &  & \ddots	& &	\\
	\hline 
	&			&	& &	& & & & &\lambda_{\Gamma}-s & 1		\\
	&			&	&	& & & & & & &\ddots &\ddots\\
	&			&	& &	& & & & & & &\lambda_{\Gamma}-s & 1		\\
	&			&	& & & & & & & & &  &\lambda_{\Gamma}-s \\	
	\hline	
	c^l_{1,1} &c^l_{2,1} &\cdots &c^l_{d_{11},1} &c^l_{1,2} &c^l_{2,2} &\cdots &c^l_{d_{21},2}&\cdots  & c^l_{1,\Gamma} & c^l_{2,\Gamma} & \cdots  & c^l_{d_{\Gamma\ell_\Gamma},\Gamma}
	\end{array}\right),$
}%
\label{eq:matrixAsI} 
\end{align}
where the empty entries are zeros. In the following analysis, we focus on the first Jordan block without loss of generality. Hence, for $s=\lambda_1$, the matrix (\ref{eq:matrixAsI}) consists of columns whose components are either 
\begin{itemize}
	\item All zeros except for the last $b_l$ elements (the respective column of $C^l$). This is the leading column of the Jordan block associated with $\lambda_1$; or
	\item All zeros except for one non-zero element (either $1$ or $\lambda_j-\lambda_1$ for $j\in[2:\Gamma]$) and the last $b^l$ elements (the respective column of $C^l$). This is the leading column of each of the Jordan blocks associated with $\lambda_j$ for all $j\in[2:\Gamma]$; or
	\item All zeros except for two non-zero elements ($1$ and $\lambda_j-\lambda_1$ for $j\in[2:\Gamma]$) and the last $b^l$ elements (the respective column of $C^l$). 
\end{itemize}  
As the rank of (\ref{eq:matrixAsI}) will not change by elementary row operations, by means of Gaussian elimination we transform it into the following form
\begin{align}
\resizebox{1\hsize}{!}{%
	$\left(\begin{array}{c c c c |c c c c | c | c c c c} 
	0&1 & & & & & & & & & &  \\
	&\ddots&\ddots & & & & & & & & &  \\
	&			&0&1 & & & & & & & \\	
	&			&			&0 & & & & & & & \\
	\hline
	&			& & &\lambda_2-\lambda_1 & 0 & & &  \\
	&			& &	& &\ddots & \ddots & &  \\
	&			& & & & &\lambda_2-\lambda_1 & 0 & & &  \\
	&			& & & & & &\lambda_2-\lambda_1 & & & \\
	\hline 
	&			& & & & & &  & \ddots	& &	\\
	\hline 
	&			&	& &	& & & & &\lambda_{\Gamma}-\lambda_1 & 0		\\
	&			&	&	& & & & & & &\ddots &\ddots\\
	&			&	& &	& & & & & & &\lambda_{\Gamma}-\lambda_1 & 0		\\
	&			&	& & & & & & & & &  &\lambda_{\Gamma}-\lambda_1 \\	
	\hline	
	c^l_{1,1} &\textbf{0} &\cdots &\textbf{0} &c^l_{1,2} &\textbf{0} &\cdots &\textbf{0}&\cdots  & c^l_{1,\Gamma} & \textbf{0} & \cdots  & \textbf{0}
	\end{array}\right),$
}%
\label{eq:matrixAsI_lambda1GaussianElim} 
\end{align}
where $\textbf{0}$ is a $b_l$-dimensional all-zero column vector. Clearly, the first column of (\ref{eq:matrixAsI_lambda1GaussianElim}) consists of $d$ elements that are zeros followed by $b_l$ entries corresponding to $c_{1,1}^l$. Therefore, to obtain a  full-ranked matrix (\ref{eq:matrixAsI_lambda1GaussianElim}), the column $c_{1,1}^l$ must be non-zero. Using the same argument with the remaining $\Gamma-1$ eigenvalues, Lemma \ref{lemma:ObseMatrixAC} is established.
\vspace*{-10mm}
\begin{IEEEbiography}[
	{\includegraphics[width=1in,height=1.2in,clip,keepaspectratio]{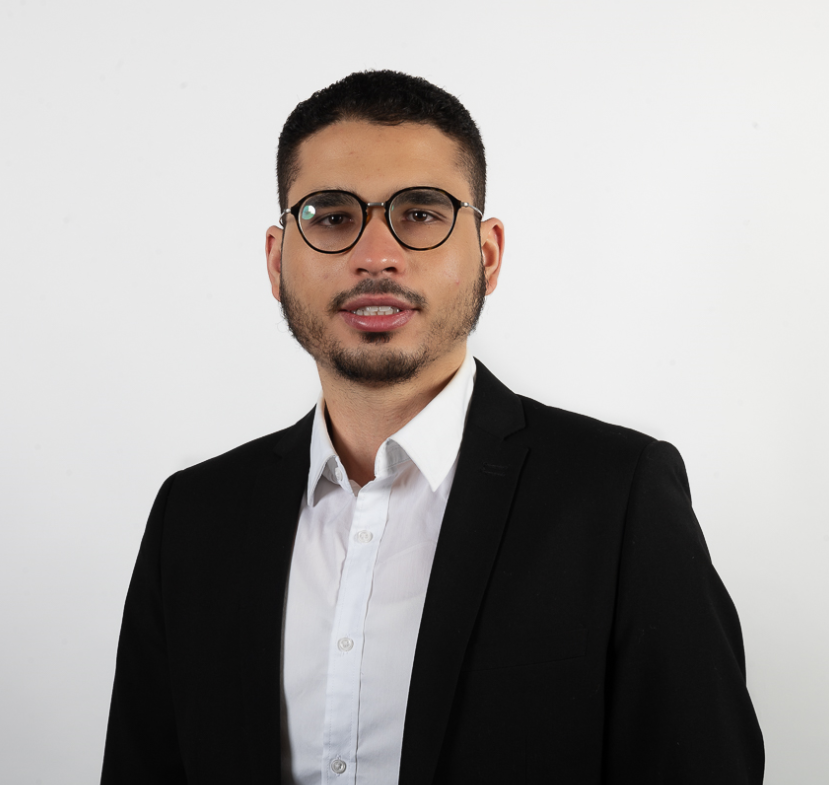}}]%
	{Ghassen Zafzouf}(S'19) 
	received a B.Sc. (2015) and M.Sc. (2018) in Electrical Eng. and Information Technology from the Technical University of Munich (Germany), an M.Eng. (2017) in Electrical Engineering from the University of Queensland (Australia), and a Ph.D. (2022) in Electrical Engineering from the University of Melbourne (Australia). His main research interests include information theory, digital communications and networked control systems. He has been the recipient of several prizes and awards including Germany's National Scholarship (2018) and the Melbourne Research Scholarship (2019). 
\end{IEEEbiography}
\vspace*{-10mm}
\begin{IEEEbiographynophoto}%[{\includegraphics[width=1in,height=1.2in,clip,keepaspectratio]{./Figures/GirishPhoto.jpg}}]%
	{\textbf{Girish N. Nair}}
	(Fellow, IEEE) was born in Malaysia  and obtained a PhD in electrical engineering from The University of Melbourne, Australia in 2000. He is currently a Professor with the Department of Electrical and Electronic Engineering, The University of Melbourne. Prof. Nair was the recipient of several prizes, including the CSS Axelby Outstanding Paper Award in 2014 and a SIAM Outstanding Paper Prize in 2006. From 2015 to 2019, he was an ARC Future Fellow, and since 2019 he has been the Principal Australian Investigator of an AUSMURI project.
\end{IEEEbiographynophoto}
\vspace*{-10mm}
\begin{IEEEbiographynophoto}%[{\includegraphics[width=1in,height=1.2in,clip,keepaspectratio]{./Figures/GirishPhoto.jpg}}]%
	{\textbf{Farhad Farokhi}}
	(Senior Member, IEEE) received the Ph.D. degree in automatic control from the KTH Royal Institute of Technology, Stockholm, Sweden, in 2014. He joined The University of Melbourne, Australia, where he is currently a Senior Lecturer (equivalent to Assistant/Associate Professor in North America). From 2018 to 2020, he was also a Research Scientist with the CSIRO's Data61, Canberra ACT, Australia. He has been involved in multiple projects on data privacy and cyber-security funded by the Australian Research Council, the Defence Science and Technology Group, the Department of the Prime Minister and Cabinet, the Department of Environment and Energy, and the CSIRO. Dr. Farokhi was the recipient of the VESKI Victoria Fellowship from the Victoria State Government, Australia, and the McKenzie Fellowship, the 2015 Early Career Researcher Award, and MSE Excellence Award for Early Career Research from The University of Melbourne. He is the Associate Editor for \textit{IET Smart Grid}, \textit{Results in Control and Optimization}, and Conference Editorial Board of IEEE Control System Society.
\end{IEEEbiographynophoto}

%\begin{IEEEbiography}[{\includegraphics[width=1in,height=1.25in,clip,keepaspectratio]{./Figures/FarhadFarokhi.jpg}}]%
%	{Farhad Farokhi}
%	 (S'11--M'15--SM'20) received Ph.D. from the KTH Royal Institute of Technology in 2014. He joined The University of Melbourne, where he is currently a Lecturer (equivalent to Assistant Professor in North America). From 2018--2020, he was also a Research Scientist at the CSIRO's Data61. He has been the recipient of the VESKI Victoria Fellowship from the Victoria State Government, Australia, and the McKenzie Fellowship, the 2015 Early Career Researcher Award, and MSE Excellence Award for Early Career Research from The University of Melbourne. He has been involved in multiple projects on data privacy and cyber-security funded by the Australian Research Council, the Defence Science and Technology Group, the Department of the Prime Minister and Cabinet, the Department of Environment and Energy, and the CSIRO. He is the associate editor for IET Smart Grid, Results in Control and Optimization, and Conference Editorial Board of IEEE Control System Society.
%\end{IEEEbiography}


\begin{thebibliography}{00}

\bibitem{marano2008} {S. Marano, V. Matta and P. Willett, ``Distributed Estimation in Large Wireless Sensor Networks via a Locally Optimum Approach," in \textit{IEEE Trans. on Signal Processing}, vol. 56, no. 2, pp. 748-756, Feb. 2008.}

\bibitem{chisci2019} {G. Chisci, H. Elsawy, A. Conti, M. -S. Alouini and M. Z. Win, ``Uncoordinated Massive Wireless Networks: Spatiotemporal Models and Multiaccess Strategies," in \textit{IEEE/ACM Trans. on Networking}, vol. 27, no. 3, pp. 918-931, June 2019.}

\bibitem{conti2019} {A. Conti, S. Mazuelas, S. Bartoletti, W. C. Lindsey and M. Z. Win, ``Soft Information for Localization-of-Things," in \textit{Proc. of the IEEE}, vol. 107(11), pp. 2240-2264, Nov. 2019.}

\bibitem{zabini2016}  {F. Zabini and A. Conti, ``Inhomogeneous Poisson Sampling of Finite-Energy Signals With Uncertainties in  $\mathbb{R}^d$ ," in \textit{IEEE Trans. on Signal Processing}, vol. 64, no. 18, pp. 4679-4694, Sept. 2016.}

\bibitem{shirvanimoghaddam2018} M.~Shirvanimoghaddam, M.~S.~Mohammadi, R.~Abbas, A.~Minja, C.~Yue, B.~Matuz, G.~Han, Z.~Lin, Y.~Li, S.~Johnson and B.~Vucetic, ``Short block-length codes for ultra-reliable low latency communications,'' in \emph{IEEE Communications Magazine}, 57(2), pp. 130--137, 2018.

\bibitem{li2010} {W. W. L. Li, Y. J. Zhang, A. M. -C. So and M. Z. Win, ``Slow Adaptive OFDMA Systems Through Chance Constrained Programming," in \textit{IEEE Trans. on Signal Processing}, vol. 58, no. 7, pp. 3858-3869, July 2010.}

\bibitem{nesic2007} M. Tabbara, D. Nesic and A. R. Teel, ``Stability of Wireless and Wireline Networked Control Systems," in \emph{IEEE Trans. on Auto. Control}, vol. 52, no. 9, pp. 1615--1630, Sept. 2007.

\bibitem{heemels2010} Heemels W.P.M.H., van de Wouw N., ``Stability and Stabilization of Networked Control Systems''. In: Bemporad A., Heemels M., Johansson M. (eds) \textit{Networked Control Systems}. Lecture Notes in Control and Information Sciences, vol. 406. Springer, London, 2010.  

\bibitem{b3} C.~E.~Shannon, ``Two-way communication channels," in \emph{Proc. Fourth Berkeley Symp. Mathematical Statistics and Probability}, vol. 1, pp. 611--644, University of California Press, 1961.

\bibitem{nair2013} G.~N.~Nair, ``A nonstochastic information theory for communication and state estimation," in \emph{IEEE Trans. Auto. Control}, vol. 58, pp. 1497--1510, June 2013.

\bibitem{nair2015} G.~N.~Nair, ``Nonstochastic information concepts for estimation and control," in \emph{Proc. 54th IEEE Conf. Dec. and Cont. (CDC)}, pp. 45--56, Dec. 2015.

\bibitem{massimo2019} A.~Rangi and M.~Franceschetti, ``Channel coding theorems in non-stochastic information theory,'' in \emph{2021 IEEE Int. Symp. Inf. Theo. (ISIT)}, 2021, pp. 1790--1795.
 
\bibitem{gargrani2020} M. Gagrani, Y. Ouyang, M. Rasouli and A. Nayyar, ``Worst-case guarantees for remote estimation of an uncertain source,'' in \emph{IEEE Trans. on Auto. Con.}, vol 66(4), 1794--1801, 2021.

%\bibitem{wollschlaeger2017} M.~Wollschlaeger, T.~Sauter and J.~Jasperneite, ``The future of industrial communication: automation networks in the era of the internet of things and industry 4.0,'' in \emph{IEEE Industrial Electronics Magazine}, vol. 11, no. 1, pp. 17--27, March 2017, \texttt{doi: 10.1109/MIE.2017.2649104}.

\bibitem{avestimehr2011} A.S.~Avestimehr, S.N.~Diggavi and D.N.C.~Tse, ``Wireless network information flow: a deterministic approach,'' in \emph{IEEE Trans. on Information Theory}, vol. 57(4), pp. 1872--1905, April 2011.

\bibitem{coverThomas91} T.~M.~Cover and J.~A.~Thomas, \emph{Elements of Information Theory}, New York: Wiley, 1991.

\bibitem{gamal12} A.~El Gamal and Y.~H.~Kim, {\em Network Information Theory}, Cambridge University Press, USA, 2012.

\bibitem{ZEshannon1956} C.~E.~Shannon, ``The zero-error capacity of a noisy channel," \emph{IRE Trans. Information Theory}, vol. 2, pp. 8--19, Sep. 1956.

\bibitem{matveevSavkin2007} A.~S.~Matveev and A.~V.~Savkin, ``Shannon zero-error capacity in the problems of state estimation and stabilization via noisy communication channels,"  {\em International Jour. Control}, vol. 80, pp. 241--255, 2007.

\bibitem{wiese2019} M.~Wiese, T.J.~Oechtering, K.H.~Johansson, P.~Papadimitratos, H.~Sandberg \& M.~Skoglund, ``Secure estimation and zero-error secrecy capacity,'' \emph{IEEE Trans. on Auto. Con.}, vol. 64(3), pp. 1047--1062, March 2019.

\bibitem{saberi2020} A.~Saberi, G.~N.~Nair and F.~Farokhi, ``Bounded estimation over finite-state channels: relating topological entropy and zero-error capacity,'' in \emph{IEEE Trans. on Auto. Control},  vol. 67(8), pp. 4029--4044, Aug. 2022.

\bibitem{saberi2021} A.~Saberi, F.~Farokhi and G. N. Nair, ``Zero-error feedback capacity for bounded stabilization and finite-state additive noise channels," in \textit{IEEE Trans. on Info. Theory}, vol. 68(10), pp. 6335--6355, Oct. 2022.

\bibitem{sahai2006} {A. Sahai and S. Mitter, ``The Necessity and Sufficiency of Anytime Capacity for Stabilization of a Linear System Over a Noisy Communication Link---Part I: Scalar Systems," in \textit{IEEE Trans. on Information Theory}, vol. 52, no. 8, pp. 3369-3395, Aug. 2006.}

\bibitem{nair2007} {G. N. Nair, F. Fagnani, S. Zampieri and R. J. Evans, ``Feedback Control Under Data Rate Constraints: An Overview," in \textit{Proc. of the IEEE}, vol. 95(1), pp. 108-137, Jan. 2007.}

\bibitem{borkar2001} {V.~S. Borkar, S. Mitter and S. Tatikonda. (2001). Markov Control Problems under Communication Constraints. \textit{Comm. in Info. and Sys.}.}

\bibitem{nigel2010} {Nigel J. Newton, Sanjoy K. Mitter ``Variational Bayes and a Problem of Reliable Communication I: Finite Systems," \textit{Comm. in Info. and Sys.}, Commun. Inf. Syst. 10(3), 155-182, 2010.}

\bibitem{rich2016} M.~Rich and N.~Elia, ``Optimal mean-square performance for MIMO networked systems,'' \emph{Amer. Contr. Conf. (ACC)}, pp. 6040–6045, 2015.

\bibitem{garone2012} E.~Garone, B.~Sinopoli, A.~Goldsmith and A.~Casavola, ``LQG control for MIMO systems over multiple erasure channels with perfect acknowledgment,'' \emph{IEEE Trans. on Auto. Contr.}, vol. 57(2), pp. 450--456, 2012.

\bibitem{saber2007} {R. Olfati-Saber, "Distributed Kalman filtering for sensor networks," \textit{2007 46th IEEE Conf. on Dec. and Cont.}, 2007, pp. 5492--5498.}

\bibitem{saber2009} {R. Olfati-Saber, "Kalman-Consensus Filter: Optimality, stability, and performance," \textit{Proc. of the 48h IEEE Conf. on Dec. and Cont. (CDC)} held jointly with \textit{28th Chinese Cont. Conf.}, 2009, pp. 7036--7042.}

\bibitem{tanwani2021} A. Tanwani, "Suboptimal Filtering over Sensor Networks with Random Communication," in \emph{IEEE Trans. on Auto. Control}, vol. 67(10), pp. 5456--5463, Oct. 2022.

\bibitem{zaidi2010} A.A.~Zaidi, T.J.~Oechtering and M.~Skoglund, ``Sufficient conditions for closed-loop control over multiple-access and broadcast channels," \emph{IEEE Conf. on Dec. and Cont. (CDC)}, vol. 49(9), pp. 4771--4776, Dec. 2010.

\bibitem{gupta2019} J.~Liu and V.~Gupta, ``Stabilizability conditions for linear time invariant systems across a Gaussian MAC channel", \emph{IEEE Trans. on Auto. Control}, vol. 64, no. 6, pp. 2310--2323, June 2019.

\bibitem{ahlswede71} R.~Ahlswede, ``Two-way communication channels," in \emph{Proc. IEEE Int. Symp. Inf. Theory, Tsahkadzor, Armenian SSR}, pp. 23--52, Sep. 1971.

\bibitem{liao72} H.~Liao, ``{\em Multiple-access channels}". Ph.D. dissert., Univ. Hawaii , 1972.

\bibitem{nairITW2019} G.~Zafzouf, G.N.~Nair and J.S.~Evans, ``Zero-error capacity of multiple access channels via nonstochastic information," \emph{2019 IEEE Info. Theo. Workshop (ITW)}, Aug. 2019.

\bibitem{zafzoufITW2020} G.~Zafzouf and G.~N.~Nair , ``Zero-error capacity region of a class of multiple access channels with inter-user correlation," \emph{2020 IEEE Info. Theo. Workshop (ITW)}, Apr. 2021.

\bibitem{zafzoufISIT2020} G.~Zafzouf and G. N. Nair, ``Distributed state estimation with bounded errors over multiple access channels," \emph{2020 IEEE Int. Symp. Inf. Theo. (ISIT)}, Jul. 2020.

\bibitem{renyi70} A.~R\'{e}nyi, {\em Foundations of Probability}, Holden-Day, San Francisco, 1970.

\bibitem{shannonLatticeTIT1953} C.~E.~Shannon, ``The lattice theory of information," {\em Trans. IRE Professional Group on Info. Theo.}, vol. 1, pp. 105--107, Feb. 1953.


\bibitem{wolfITW2004} S.~Wolf and J.~Wullschleger, ``Zero-error information  and applications in cryptography," in {\em Proc. Info. Theo. Workshop}, pp. 1--6, Oct. 2004.

\bibitem{roger2012} R.~A.~Horn and C.~R.~Johnson, \textit{Matrix Analysis}, ($2^{\text{nd}}$ ed.), Cambridge University Press, USA, 2012.

%\bibitem{zafzoufArxiv} G.~Zafzouf, G.~N.~Nair and F.~Farokhi ``Uniformly Bounded State Estimation over Multiple Access Channels”, \emph{arXiv preprint:} \texttt{\url{https://doi.org/10.48550/arXiv.2002.03294}}, Mar. 2022.

\bibitem{dorf2001} R.~C.~Dorf and R.~H.~Bishop, \textit{Modern Control Systems}, ($14^{\text{th}}$ ed.), Upper Saddle River, NJ: Prentice Hall, 2021.

\bibitem{austrin2018} P.~Austrin, P.~Kaski, M.~Koivisto and J.~Nederlof, ``Sharper upper bounds for unbalanced uniquely decodable code pairs," in \emph{IEEE Trans. on Info. Theory}, vol. 64(2), pp. 1368--1373, Feb. 2018.

\bibitem{mattas2005} M.~Mattas and P.~R.~J.~\"{O}sterg\aa rd, ``A new bound for the zero-error capacity region of the two-user binary adder channel," in \emph{IEEE Trans. on Info. Theory}, vol. 51(9), pp. 3289--3291, Sep. 2005.

\bibitem{tatikonda2004} S.~Tatikonda and S.~Mitter, ``Control under communication constraints," in {\em IEEE Trans. on Auto. Control}, vol. 49(7), pp. 1056--1068, July 2004.

\bibitem{slepianWolf1973} D.~Slepian and J.~K.~Wolf, ``A coding theorem for multiple access channels with correlated sources," \emph{Bell System Technical Jour.}, vol. 52, pp. 1037--1076, Sep. 1973. 

\bibitem{linSys} P.~J.~Antsaklis and A.~Michel, ``Linear Systems", McGraw-Hill Higher Education, 1997.

\bibitem{brockett2000} R.~W.~Brockett and D.~Liberzon, ``Quantized feedback stabilization of linear systems,'' in \emph{IEEE Trans. on Auto. Control}, vol. 45, no. 7, pp. 1279--1289, July 2000.

\bibitem{yamamoto67} T. Yamamoto. ``On the extreme values of the roots of matrices," in {\em Journal of the Math. Soc. of Japan}, 19(2), pp. 173--178, Apr. 1967.

\bibitem{linSysChen} C.~T.~Chen, ``Linear System Theory and Design", Oxford University Press, 1999.
        
\end{thebibliography}
\end{document}